\newtheorem {theorem} {Theorem}
\newtheorem {proposition} [theorem]{Proposition}
\newtheorem {lemma}  [theorem]{Lemma}
\newcommand{\R}{\ensuremath{\mathbb{R}}}
\newcommand{\N}{\ensuremath{\mathbb{N}}}
\newcommand{\pd}[2]{\dfrac{\partial #1}{\partial #2}}
\title[Analytic integrability of Bianchi models]
{Analytic integrability of the Bianchi Class {\sc A} cosmological
models with $0\leq k<1$}
\thanks{All the authors are partially supported by the MICINN/FEDER grant
MTM2008-03437. A.F. is additionally supported by grants {\it Juan de
la Cierva}, 2009SGR410 and MTM2009-14163-C02-02. J.L. is additionally
partially supported by an AGAUR grant number 2009SGR410 and by ICREA
Academia. C.P. is additionally partially supported by the MICINN/FEDER
grant number MTM2009-06973 and by the AGAUR grant number 2009SGR859.
}
\author[A. Ferragut, J. Llibre and C. Pantazi]
{Antoni Ferragut$^1$, Jaume Llibre$^2$ and Chara Pantazi$^3$}
\address{$^1$ Departament de Matem\`{a}tica Aplicada I, Universitat Polit\`ecnica de Cata\-lunya, ETSEIB, Av. Diagonal, 647, 08028, Barcelona, Catalonia, Spain} \email{Antoni.Ferragut@upc.edu}
\address{$^2$ Departament de Matem\`{a}tiques, Universitat Aut\`{o}noma de Barcelona, Edi\-fici C, 08193 Bellaterra, Barcelona, Catalonia, Spain}
\email{jllibre@mat.uab.cat}
\address{$^3$ Departament de Matem\`atica Aplicada I, Universitat Polit\`ecnica de Cata\-lunya, EPSEB, Av. Doctor Mara\~{n}\'on, 44--50, 08028 Barcelona, Catalonia, Spain}
\email{chara.pantazi@upc.edu}
\subjclass[2010]{34A05, 34A34, 34C14}
\keywords{homogeneous systems, polynomial first integral, analytic
first integral, Bianchi cosmological models}
\begin{document}

\begin{abstract}
There are many works studying the integrability of the Bianchi class
{\sc A} cosmologies with $k=1.$ Here we characterize the analytic
integrability of the Bianchi class {\sc A} cosmological models when
$0\leq k<1$.
\end{abstract}

\maketitle


\section{Introduction}\label{s1}
Bianchi models describe space-times which are foliated by
homogeneous (and so we have three dimensional Lie algebras)
hypersurfaces of constant time. Bianchi \cite{B1, B2} was the first
to classify three dimensional Lie algebras which are nonisomorphic.
There are nine types of models according to the dimension $n$ of the
derivative subalgebra:
\begin{enumerate}
\item[(a)] $n=0$: type I;
\item[(b)] $n=1$: types II, III;
\item[(c)] $n=2$: types IV, V, VI, VII;
\item[(d)] $n=3$: types VIII, IX.
\end{enumerate}

If we consider $X_1,X_2,X_3$ an appropriate  basis of the
3-dimensional Lie Algebra, then the classification depends on a
scalar $a\in\R$ and a vector $(n_1,n_2,n_3)$, with
$n_i\in\{+1,-1,0\}$ such that
\[
[X_1,X_2]=n_3X_3, \quad [X_2,X_3]=n_1X_1-aX_2, \quad
[X_3,X_1]=n_2X_2+aX_1,
\]
where $[,]$ is the Lie bracket. In particular for $a=0$ we obtain
models of class A and for $a\neq 0$  we obtain models of class B. A
good reference for the Bianchi models is Bogoyavlensky \cite{B}.

\smallskip

In a cosmological model Einstein's equations connect the geometry of
the space-time with the properties of the matter. The matter
occupying the space-time is determined by the stress energy tensor
of the matter. In our study we follow \cite{B} and we consider the
hydrodynamical tensor of the matter. We will work with an equation
of state of matter of the form $p=k\varepsilon$,  where
$\varepsilon$ is the energy density of the matter, $p$ is the
pressure and $0\leq k \leq 1$, see also \cite{B}.  The case $k=1$ is
studied in \cite{FLP}. Here we consider the case where $0<k<1$.

\smallskip

Following \cite{B} the Einstein equations for the homogenous
cosmologies of class A without motion of matter can be formalized as
a Hamiltonian system in the phase space $p_i,q_i$ for $i=1,2,3$ with
the Hamiltonian function
\[
H=\dfrac{1}{(q_1q_2q_3)^{\frac{1-k}{2}}}\left(T(p_iq_i)+\dfrac{1}{4}
V_G(q_i)\right).
\]
Here $T$ is the kinetic energy (not positive defined) and $V_G$ is
the potential. According to  \cite{B} (Section 4 of Chapter II) the
kinetic and the potential energy are given, respectively, by
\[
\begin{array}{ccc}
T(p_iq_i)&=&2\sum\limits_{i<j}^3
p_ip_jq_iq_j-\sum\limits_{i=1}^3p_i^2q_i^2, \\
V_G(q_i)&=&2\sum\limits_{i<j}^3n_in_jq_iq_j-\sum\limits_{i=1}^3n_i^2q_i^2,\\
\end{array}
\]
for $i,j\in\{1,2,3\}$.

We consider the Hamiltonian system
\[
\dot{q}_i=\dfrac{\partial H}{\partial p_i},\qquad \dot{p}_i=-\dfrac{
\partial H}{\partial q_i},
\]
where the dot denotes derivative with respect to the time $t$. More
precisely, the Hamiltonian system is \footnotesize
\[
\begin{split}
\dot{q}_1=&2q_1(q_1q_2q_3)^{\frac{k-1}{2}}(-p_1q_1+p_2q_2+p_3q_3),\\
\dot{q}_2=&2q_2(q_1q_2q_3)^{\frac{k-1}{2}}(p_1q_1-p_2q_2+p_3q_3),\\
\dot{q}_3=&2q_3(q_1q_2q_3)^{\frac{k-1}{2}}(p_1q_1+p_2q_2-p_3q_3),\\
\dot{p}_1=&-(q_1q_2q_3)^{\frac{k-1}{2}}\left(2p_1(-p_1q_1+p_2q_2+p_3q_3)
+\dfrac{1}{2}n_1(-n_1q_1+n_2q_2+n_3q_3)\right)+\dfrac{1-k}{2q_1}H,\\
\dot{p}_2=&-(q_1q_2q_3)^{\frac{k-1}{2}}\left(2p_2(p_1q_1-p_2q_2+p_3q_3)
+\dfrac{1}{2}n_2(n_1q_1-n_2q_2+n_3q_3)\right)+\dfrac{1-k}{2q_2}H,\\
\dot{p}_3=&-(q_1q_2q_3)^{\frac{k-1}{2}}\left(2p_3(p_1q_1+p_2q_2-p_3q_3)
+\dfrac{1}{2}n_3(n_1q_1+n_2q_2-n_3q_3)\right)+\dfrac{1-k}{2q_3}H.
\end{split}
\]\normalsize

\begin{table}[b]
\begin{center}
\begin{tabular}{|c||c|c|c|c|c|c|}
\hline
Type & I & II & VI$_0$ & VII$_0$ & VIII & IX\cr\hline\hline
$a$ &  0   & 0  & 0    & 0     & 0    & 0 \cr\hline
$n_1$& 0   & 1  & 1    & 1    & 1     & 1 \cr\hline
$n_2$& 0   & 0  & $-1$   & 1     & 1    & 1 \cr\hline
$n_3$& 0   & 0  & 0     & 0     & $-1$  & 1 \cr\hline
\end{tabular}
\end{center}
\smallskip
\caption{The classification of Bianchi class A cosmologies.
}\label{BianchiClassA}
\end{table}

Note that the constants $n_1,n_2,n_3$ determine the type of the
model according to Table \ref{BianchiClassA}. After the change of
coordinates $ds=(q_1q_2q_3)^{\frac{1-k}{2}}dt$, $q_i=x_i$,
$p_i=x_{i+3}/(2x_i)$, $i=1,2,3$, we obtain the quadratic homogeneous
polynomial differential system
\begin{equation} \label{BB}
\begin{array}{ccl}
\dot{x}_1&=&x_1(-x_4+x_5+x_6),\\
\dot{x}_2&=&x_2 (x_4 - x_5 + x_6),\\
\dot{x}_3&=&x_3 (x_4 + x_5 - x_6),\\
\dot{x}_4&=&n_1x_1(n_1x_1-n_2x_2-n_3x_3)+\frac{k-1}{4}F,\\
\dot{x}_5&=&n_2x_2(-n_1x_1+n_2x_2-n_3x_3)+\frac{k-1}{4}F,\\
\dot{x}_6&=&n_3x_3(-n_1x_1-n_2x_2+n_3x_3)+\frac{k-1}{4}F,
\end{array}
\end{equation}
where
\begin{equation}\label{F}
\begin{split}
F=&n_1^2x_1^2+n_2^2x_2^2+n_3^2x_3^2-2n_1n_2x_1x_2-2n_1n_3x_1x_3-2n_2n_3x_2x_3\\
&+x_4^2+ x_5^2 + x_6^2 - 2 x_4 x_5   - 2x_5x_6-2x_4x_6.
\end{split}
\end{equation}

Note that system \eqref{BB} is a homogeneous polynomial differential
system of degree 2. The Hamiltonian $H$ becomes after the changes of variables the first integral
\begin{equation}\label{Hx}
\begin{split}
\mathcal H=&(x_1x_2x_3)^{\frac{k-1}2}F\\
=&(x_1x_2x_3)^{\frac{k-1}2}(n_1^2x_1^2+n_2^2x_2^2+n_3^2x_3^2-2n_1n_2x_1x_2-2n_1n_3x_1x_3\\
&-2n_2n_3x_2x_3+x_4^2+ x_5^2 + x_6^2 - 2 x_4 x_5-2x_5x_6-2x_4x_6)
\end{split}
\end{equation}
of system \eqref{BB}.

\smallskip

Let $U$ be an open and dense subset of $\R^6$. Then we recall that
system \eqref{BB} has a first integral $\mathcal H: U\to \R$ if
$\mathcal H$ is a non-constant $\mathcal C^1$-function such that
\[
\dot x_1\pd{\mathcal H}{x_1}+\cdots+\dot x_6\pd{\mathcal H}{x_6}=0.
\]

Many authors have studied some models of Class A for the case $k=1$
considering different types of integrability, see for exemple [4--8,
10--15]. In this work we study the analytic integrability of all
Bianchi models of class A in the variables
$(x_1,x_2,x_3,x_4,x_5,x_6)$ for $0\leq k<1$. The following result is
well known, see for instance \cite{LZ}.

\begin{proposition}\label{p1}
Let $F$ be an analytic function and let $F= \sum_{i} F_{i}$ be its
decomposition into homogeneous polynomials of degree $i$. Then $F$
is an analytic first integral of the homogeneous differential system
\eqref{BB} if and only if $F_{i}$ is a homogeneous polynomial first
integral of system \eqref{BB} for all $i$.
\end{proposition}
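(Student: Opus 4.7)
The plan is to exploit the fact that system \eqref{BB} is a homogeneous polynomial differential system of degree $2$, so that the associated vector field $\mathcal X = \sum_{j=1}^{6}\dot x_j\,\partial/\partial x_j$ sends homogeneous polynomials of degree $i$ into homogeneous polynomials of degree $i+1$. This grading property, combined with the uniqueness of the decomposition of an analytic function into homogeneous components, is the only ingredient.

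First I would observe that if $F = \sum_i F_i$ is the (absolutely convergent) expansion of the analytic function $F$ into its homogeneous components $F_i$ of degree $i$, then term-by-term differentiation is legitimate and yields
\[
\mathcal X(F) \;=\; \sum_{i\ge 0}\mathcal X(F_i),
\]
where each $\mathcal X(F_i)$ is a homogeneous polynomial of degree $i+1$, because $\partial F_i/\partial x_j$ is homogeneous of degree $i-1$ and each $\dot x_j$ is homogeneous of degree $2$.

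Next I would use the standard fact that an analytic function vanishes identically on an open set if and only if each of its homogeneous components vanishes. Applying this to $\mathcal X(F)$ and noting that the components $\mathcal X(F_i)$ sit in pairwise distinct degrees, the identity $\mathcal X(F) = 0$ on an open and dense subset $U$ is equivalent to $\mathcal X(F_i) = 0$ for every $i$, that is, each $F_i$ is a polynomial first integral of \eqref{BB}. The converse direction is immediate by linearity: if every $F_i$ is a first integral, then so is any (convergent) sum of them, and non-constancy of $F$ guarantees that at least one $F_i$ with $i\ge 1$ is a non-trivial first integral, so $F$ is a genuine analytic first integral.

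The only mildly delicate point, rather than an actual obstacle, is the justification of termwise differentiation of the power series and the argument that a convergent sum of homogeneous polynomials of strictly increasing degrees can vanish on an open set only if every summand vanishes; both are standard and follow from the uniqueness of the Taylor expansion of an analytic function at the origin. No properties of the Bianchi system are used beyond the homogeneity of degree $2$ of \eqref{BB}.
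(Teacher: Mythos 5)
The paper does not prove this proposition; it is stated as well known with a reference to \cite{LZ}, and the argument given there is precisely the grading argument you use. Your proof is correct: since the vector field of \eqref{BB} is homogeneous of degree $2$, it sends the degree-$i$ component $F_i$ to a homogeneous polynomial of degree $i+1$, so the components $\mathcal X(F_i)$ of $\mathcal X(F)$ lie in pairwise distinct degrees and the vanishing of the sum forces the vanishing of each term; the only caveat worth recording is that the constant component $F_0$ is not literally a ``first integral'' under the paper's non-constancy convention, a harmless abuse that the statement itself already commits.
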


A differential system of $n$ variables is {\it completely
integrable} if it admits $n-1$ independent first integrals.

\smallskip

According to Proposition \ref{p1} the study of the analytic first
integrals of the homogeneous system \eqref{BB} is reduced to the
study of its polynomial homogeneous first integrals. The  main
result of this paper is the characterization of the polynomial first
integrals of the Bianchi models of class A. Section \ref{S:lemes}
provides three technical lemmas that we will use in Section
\ref{S:prova} to prove the following theorem.

\begin{theorem}\label{main}
For $0\leq k<1$ the following statements hold.
\begin{itemize}
\item[(a)] The Bianchi type {\em I} model is completely integrable.

\item[(b)]  The Bianchi type {\em II} model has the polynomial first
integral $K=x_5-x_6$. This model does not admit any additional
polynomial first integral independent from $\mathcal H$ and $K$.

\item[(c)] The Bianchi type {\em VI}$_0$ and  {\em VII}$_0$ models
have no polynomial first integrals.

\item[(d)] The Bianchi type  {\em VIII} and  {\em IX} models have
no polynomial first integrals.
\end{itemize}
\end{theorem}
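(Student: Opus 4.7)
The plan is to invoke Proposition~\ref{p1} at the outset, so that the problem reduces to classifying the homogeneous polynomial first integrals of system~\eqref{BB} of each degree $m$. Once that reduction is in place, the six cases split naturally according to how many of the $n_i$ vanish.

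For part (a), when $n_1=n_2=n_3=0$, the equations for $\dot{x}_4,\dot{x}_5,\dot{x}_6$ collapse to the common value $\tfrac{k-1}{4}F$, so $K_1=x_4-x_5$ and $K_2=x_5-x_6$ are immediate polynomial first integrals. Together with $\mathcal{H}$ and two further rational/polynomial integrals that I will manufacture from the decoupled logarithmic derivatives of $x_1,x_2,x_3$ (using combinations of the $K_i$ to produce commensurate exponents), one obtains five functionally independent first integrals, giving complete integrability. For part (b) (Type II, $n_1=1$, $n_2=n_3=0$), a direct computation shows $\dot{x}_5=\dot{x}_6$, so $K=x_5-x_6$ is a polynomial first integral; the nontriviality of the extra statement is that $\mathcal{H}$ and $K$ exhaust the polynomial first integrals, which is what I treat below together with (c) and (d).

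For parts (b)--(d), my strategy is: assume $P$ is a homogeneous polynomial first integral of degree $m$ and exploit the invariant algebraic sets of \eqref{BB} to trap the shape of $P$. The coordinate hyperplanes $\{x_i=0\}$ for $i=1,2,3$ are invariant, and on each of them the induced system is of lower dimension and of the same homogeneous quadratic type, so by induction on $m$ (or by invoking the three preparatory lemmas of Section~\ref{S:lemes}) one can describe $P|_{x_i=0}$ explicitly. Writing
\[
P=\sum_{\alpha+\beta+\gamma\leq m} x_1^{\alpha}x_2^{\beta}x_3^{\gamma}\,Q_{\alpha\beta\gamma}(x_4,x_5,x_6),
\]
the information from the three restrictions pins down the terms with at least one of $\alpha,\beta,\gamma$ equal to $0$. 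The remaining terms (those divisible by $x_1x_2x_3$) are controlled by matching weights: \eqref{BB} is quasi-homogeneous, and the presence of the factor $(k-1)F/4$ in $\dot{x}_4,\dot{x}_5,\dot{x}_6$ together with $k<1$ forces specific cancellations between the $n_i$-part and the $F$-part of $\dot{P}$. By isolating the coefficient of the highest power of a suitable variable in $\dot{P}=0$, one gets recursive relations for the $Q_{\alpha\beta\gamma}$ whose only solutions are combinations of $\mathcal{H}$ and (in the Type~II case) of $K$; in the Types VI$_0$, VII$_0$, VIII, IX cases one obtains outright contradictions because the nonzero $n_i$'s couple the variables too strongly to admit any polynomial invariant.

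The main obstacle will be the last step, namely the combinatorial bookkeeping that shows the interior terms $x_1^{\alpha}x_2^{\beta}x_3^{\gamma}Q_{\alpha\beta\gamma}$ (with $\alpha,\beta,\gamma\geq 1$) must vanish in Types VIII and IX, where all $n_i=\pm 1$ and none of the symmetries between coordinates is broken. I expect this is precisely what the three technical lemmas of Section~\ref{S:lemes} are designed to handle: one of them almost certainly lets one read off the highest-weight monomial of $P$ from the restriction to a line through the origin, another propagates the vanishing from the boundary hyperplanes inward, and the third deals with the coupling introduced by the $(1-k)H/q_i$ terms (equivalently, the $F$-terms in~\eqref{BB}) that make $k<1$ genuinely different from $k=1$. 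Once those lemmas are in hand, the four cases reduce to finite inductions on the degree $m$.
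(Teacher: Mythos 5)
Your outline follows the same skeleton as the paper's proof: reduce to homogeneous polynomial first integrals via Proposition~\ref{p1}, exploit the invariance of the hyperplanes $\{x_i=0\}$, $i=1,2,3$, and peel off the parts of $P$ supported on those hyperplanes before attacking the ``interior'' terms. But as written the proposal has no proof of the one fact everything else hangs on. After restricting to $x_1=x_2=x_3=0$ (or after stripping the powers $x_1^{\alpha}x_2^{\beta}x_3^{\gamma}$), what remains in every case of (b)--(d) is a first-order linear PDE of the form
\[
(a_1x_4+a_2x_5+a_3x_6)\,g+\frac{k-1}4F_{123}\left(\pd g{x_4}+\pd g{x_5}+\pd g{x_6}\right)=0
\]
(possibly with an inhomogeneous term $\partial h/\partial x_5$), and the whole point is that this equation has \emph{no nonzero homogeneous polynomial solution} when $a_1,a_2,a_3$ are not all equal. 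That is not a formality: the paper proves it by solving the PDE explicitly along characteristics and showing the general solution involves $\bigl(\mp(x_4+x_5+x_6)+2\sqrt{\Delta}\bigr)^{\Delta_1\pm\Delta_2}$ with a non-constant exponent unless $a_1=a_2=a_3$, and the inhomogeneous variant (Lemma~\ref{L:Eq.dificil}) needs a separate partial-fraction argument to kill logarithmic terms. Your proposal replaces this with ``recursive relations \ldots whose only solutions are combinations of $\mathcal H$ and $K$'' and a guess at what the technical lemmas say; the guesses do not match their actual content (Lemma~\ref{L7} is just polynomial division, and neither of the other two is about ``highest-weight monomials'' or about quasi-homogeneous weights --- system~\eqref{BB} is homogeneous in the standard grading, so there is no extra weight structure to match). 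Without a proof of the core non-existence statement, parts (c) and (d) and the uniqueness claim in (b) are unsupported.

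Two smaller points. In part (a), the two additional first integrals cannot be taken rational or polynomial: the ones that exist are of the form $\left(x_1/x_2\right)^{(1-k)/2}\bigl((x_4+x_5+x_6-2\sqrt{\Delta})/(x_4+x_5+x_6+2\sqrt{\Delta})\bigr)^{(x_4-x_5)/\sqrt{\Delta}}$, i.e.\ genuinely transcendental with a non-constant exponent, so ``commensurate exponents'' will not produce rational functions (indeed parts (b)--(d) show how scarce polynomial integrals are). Complete integrability still holds, but you must exhibit these transcendental integrals and check independence. Second, in (b) you must also rule out polynomial first integrals independent of $K$ --- note that $\mathcal H$ itself is \emph{not} polynomial for $0\le k<1$, so the statement to prove is that every homogeneous polynomial first integral is a function of $x_5-x_6$ alone; your decomposition would need to track the inhomogeneous terms $\partial h_1/\partial x_4$ that survive the restriction, which is exactly where Lemma~\ref{L:Eq.dificil} enters and where a purely ``coefficient of the highest power'' argument is not obviously sufficient.
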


Statement (a) of Theorem \ref{main} apparently is well known for
people working in the area, but we cannot find a reference where
 it is proved.

\section{Some auxiliary lemmas}\label{S:lemes}

In order to prove Theorem \ref{main} we shall use the following
three lemmas.

\begin{lemma}[see \cite{LV3}]\label{L7}
Let $x_k$ be a one-dimensional variable, $k\in\{1,\ldots,n\}$, $n>1$
and let $f=f(x_1,\ldots,x_n)$ be a polynomial. For
$l\in\{1,\cdots,n\}$ and $c_0$ a constant let
$f_l=f(x_1,\ldots,x_n)|_{x_l=c_0}$. Then there exists a polynomial
$g=g(x_1,\ldots,x_n)$ such that $f= f_l+(x_l-c_0)g$.
\end{lemma}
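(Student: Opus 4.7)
The statement is the multivariable polynomial analogue of the classical one-variable factor/remainder theorem, so the plan is to freeze the variable $x_l$ and collect terms according to its powers, reducing the problem to the familiar identity that $x_l-c_0$ divides $x_l^j-c_0^j$ for every $j\geq 1$.

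First I would expand $f$ as a polynomial in $x_l$ with coefficients in the polynomial ring on the remaining variables:
$$f(x_1,\ldots,x_n)=\sum_{j=0}^{d}a_j(x_1,\ldots,x_{l-1},x_{l+1},\ldots,x_n)\,x_l^j,$$
where $d$ is the degree of $f$ in $x_l$. This representation is unique, and each coefficient $a_j$ is a polynomial in the remaining $n-1$ variables. Substituting $x_l=c_0$ gives
$$f_l=\sum_{j=0}^{d}a_j\,c_0^j,$$
whence
$$f-f_l=\sum_{j=1}^{d}a_j\bigl(x_l^j-c_0^j\bigr).$$

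Next I would invoke the elementary geometric-series identity
$$x_l^j-c_0^j=(x_l-c_0)\sum_{i=0}^{j-1}c_0^{j-1-i}x_l^i,$$
valid for every $j\geq 1$, and pull the common factor $x_l-c_0$ out of the sum, obtaining
$$f-f_l=(x_l-c_0)\,g,\qquad g=\sum_{j=1}^{d}a_j\sum_{i=0}^{j-1}c_0^{j-1-i}x_l^i.$$
Since $g$ is a finite sum of products of polynomials, it is itself a polynomial in $x_1,\ldots,x_n$, which is exactly what is claimed.

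There is essentially no obstacle: the result is elementary and hinges only on verifying that the quotient $g$ obtained after division by $x_l-c_0$ remains in the polynomial ring, which is transparent from the explicit formula above. The subsequent use of the lemma is simply to let one ``Taylor-expand'' a polynomial first integral around a hyperplane $x_l=c_0$ and isolate the remainder term $(x_l-c_0)g$, so it is natural that the statement is recorded in this constructive form rather than appealing abstractly to the fact that $R[x_l]/(x_l-c_0)\cong R$.
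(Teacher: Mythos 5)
Your proof is correct and is the standard division-with-remainder argument; the paper itself gives no proof of this lemma, deferring to the cited reference, and your explicit expansion in powers of $x_l$ together with the identity $x_l^j-c_0^j=(x_l-c_0)\sum_{i=0}^{j-1}c_0^{j-1-i}x_l^i$ is exactly the expected justification.
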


\begin{lemma}\label{L:Eq.estrella}
Let $g=g(x_4,x_5,x_6)$ be a homogeneous polynomial solution of the
homogeneous partial differential equation
\begin{equation}\label{Eq:estrella}
(a_1x_4+a_2x_5+a_3x_6)g+\frac{k-1}4F_{123}\left(\pd g{x_4}+\pd
g{x_5}+\pd g{x_6}\right)=0,
\end{equation}
where $F_{123}=x_4^2+x_5^2+x_6^2-2(x_4x_5+x_4x_6+x_5x_6)$ and
$a_1,a_2,a_3\in\R$
 are such that $(a_1-a_2)^2+(a_1-a_3)^2\neq0$. Then
$g\equiv0$.
\end{lemma}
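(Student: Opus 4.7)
The plan is to apply a linear change of coordinates that collapses the common direction of the three partial derivatives onto a single variable, and then to perform a descent by repeatedly factoring out a universal polynomial divisor of $g$. Set $u = x_4 + x_5 + x_6$, $v = x_5 - x_4$, $w = x_6 - x_4$. A direct computation gives $F_{123} = (4Q - u^2)/3$ with $Q = v^2 - vw + w^2$, shows that $\partial_{x_4} + \partial_{x_5} + \partial_{x_6} = 3\,\partial_u$, and writes $a_1 x_4 + a_2 x_5 + a_3 x_6 = \alpha u + \beta v + \gamma w$ for constants $\alpha,\beta,\gamma$. A short check confirms that the hypothesis $(a_1-a_2)^2+(a_1-a_3)^2\neq 0$ is equivalent to $(\beta,\gamma)\neq(0,0)$. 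In the new coordinates, equation \eqref{Eq:estrella} becomes
\[
(\alpha u + \beta v + \gamma w)\,g + \frac{k-1}{4}(4Q - u^2)\,g_u = 0. \qquad (\star)
\]

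The key observation is that the polynomial $u^2 - 4Q$ is irreducible in $\mathbb{R}[u,v,w]$: indeed $Q = v^2 - vw + w^2$ is positive definite (its discriminant as a quadratic in $v$ is $-3$), so $Q$ is not a square in $\mathbb{R}[v,w]$ and $u^2-4Q$ admits no factorization over $\mathbb{R}[v,w][u]$. Reducing $(\star)$ modulo the prime ideal $(u^2-4Q)$ yields $(\alpha u + \beta v + \gamma w)\,g\equiv 0$. Since $(\beta,\gamma)\neq(0,0)$ the linear form $\alpha u + \beta v + \gamma w$ is nonzero, and being of degree $1<2$ it is not a multiple of $u^2-4Q$; hence it represents a non-zero-divisor in the integral domain $\mathbb{R}[u,v,w]/(u^2-4Q)$. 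Consequently $u^2-4Q$ divides $g$ in $\mathbb{R}[u,v,w]$.

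Write $g = (u^2-4Q)\,h$ and substitute back into $(\star)$ via $g_u = 2u\,h + (u^2-4Q)\,h_u$. Factoring $(u^2-4Q)$ out of every term produces
\[
\left[\left(\alpha - \frac{k-1}{2}\right) u + \beta v + \gamma w\right] h + \frac{k-1}{4}(4Q-u^2)\,h_u = 0,
\]
which is again of the form $(\star)$, but with $\alpha$ replaced by $\alpha - (k-1)/2$; crucially $(\beta,\gamma)$ is unchanged, so the hypothesis persists for $h$. Iterating the descent yields $g = (u^2-4Q)^n h_n$ with $\deg h_n = \deg g - 2n$. Once $\deg h_n < 2$, the reduction modulo $(u^2-4Q)$ applied to $h_n$'s equation gives $u^2-4Q \mid L^{(n)} h_n$ with $L^{(n)}$ the (nonzero) linear form, whence $u^2-4Q\mid h_n$ by irreducibility; this is possible only if $h_n\equiv 0$, so $g\equiv 0$.

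The technical point that needs the most care is the first reduction step, i.e.\ the passage from $(\star)$ to the divisibility $u^2-4Q\mid g$, which hinges on the irreducibility of $u^2-4Q$ and on the linear form being a non-zero-divisor modulo it. Once this is established, the descent itself is essentially mechanical, precisely because only $\alpha$ gets shifted at each stage while $(\beta,\gamma)\neq(0,0)$ is preserved, so the next reduction step is always legitimate.
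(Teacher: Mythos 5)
Your proposal is correct, and it proves the lemma by a genuinely different route from the paper. The paper integrates the PDE along characteristics and exhibits the general solution in closed form, namely $f(x_4-x_5,x_4-x_6)\bigl(-x_4-x_5-x_6+2\sqrt{\Delta}\bigr)^{\Delta_1+\Delta_2}\bigl(x_4+x_5+x_6+2\sqrt{\Delta}\bigr)^{\Delta_1-\Delta_2}$, and then argues that such an expression can only be a polynomial when $\Delta_2=0$, a linear condition whose unique solution $a_1=a_2=a_3$ is excluded by hypothesis. You instead stay entirely inside $\R[x_4,x_5,x_6]$: after the change of variables $u=x_4+x_5+x_6$, $v=x_5-x_4$, $w=x_6-x_4$ (all of which I checked: $3F_{123}=4Q-u^2$ with $Q=v^2-vw+w^2$ positive definite, the sum of the three partials becomes $3\partial_u$, and $(\beta,\gamma)\neq(0,0)$ is exactly the hypothesis on the $a_i$), you reduce modulo the prime ideal generated by the irreducible quadric $u^2-4Q$ to conclude $F_{123}\mid g$, and then run a degree descent in which only the coefficient $\alpha$ of $u$ shifts by $-(k-1)/2$ at each stage while $(\beta,\gamma)$ is preserved; I verified the substitution $g=(u^2-4Q)h$, $g_u=2uh+(u^2-4Q)h_u$ and it does reproduce an equation of the same shape. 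What your approach buys is rigor and self-containment: the paper's step asserting that its closed-form expression is a polynomial \emph{only if} $\Delta_1\in\N$ and $\Delta_2=0$ is stated without justification and is the delicate point of that argument (one must rule out cancellations between the two factors with irrational exponents and the arbitrary function $f$), whereas your divisibility argument needs nothing beyond irreducibility of a positive definite binary form and unique factorization. What the paper's approach buys is the explicit description of the full (non-polynomial) solution space of the PDE, which is reused elsewhere in the paper to list non-polynomial first integrals; your argument yields only the vanishing statement, which is all the lemma requires.
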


\begin{proof}
The general solution of equation \eqref{Eq:estrella} is\small
\[
\begin{split}
g(x_4,x_5,x_6)=&f(x_4-x_5,x_4-x_6)\\
&\left(-x_4-x_5-x_6+2\sqrt{\Delta}\right)^{\Delta_1+\Delta_2}
\left(x_4+x_5+x_6+2\sqrt{\Delta}\right)^{\Delta_1-\Delta_2},
\end{split}
\]\normalsize
where $\Delta_1=2(a_1+a_2+a_3)/(3(k-1))$,
$\Delta_2=((2a_1-a_2-a_3)x_4+(-a_1+2a_2-a_3)x_5+(-a_1-a_2+2a_3)x_6)
/(3(k-1)\sqrt{\Delta})$, $\Delta=x_4^2+x_5^2+x_6^2-x_4x_5-
x_4x_6-x_5x_6$ and $f$ is an arbitrary function. We note that $g$ is
a polynomial if and only if $\Delta_1\in\N$, $\Delta_2=0$ and $f$ is
a polynomial. In particular, the relation $\Delta_2=0$ is equivalent
to the linear  system
\[
\left(\begin{array}{ccc}
      2&-1&-1\cr
-1&2&-1\cr
-1&-1&2
      \end{array}
\right)\left(\begin{array}{c}a_1\cr a_2\cr
a_3\end{array}\right)=\left(\begin{array}{c}0\cr 0\cr
0\end{array}\right).
\]
The solution of this  system  is $a_1=a_2=a_3$. This cannot happen
by assumption. Therefore $g$ is not a polynomial unless $g\equiv 0$.
\end{proof}

\begin{lemma}\label{L:Eq.dificil}
Let $g=g(x_4,x_5,x_6)$ and $h=h(x_4-x_5,x_4-x_6)$ be homogeneous
polynomials of respective degrees $n-2$ and $n$ such that
\begin{equation}\label{Eq:dificil}
2(x_4-x_5+x_6)g+\frac{k-1}4F_{123}\left(\pd g{x_4}+\pd g{x_5}+\pd
g{x_6}\right)+\pd{h}{x_5}=0,
\end{equation}
where $F_{123}=x_4^2+x_5^2+x_6^2-2(x_4x_5+x_4x_6+x_5x_6)$. Then
$h=h(x_4-x_6)$ and $g\equiv0$.
\end{lemma}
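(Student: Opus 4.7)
The plan is to introduce coordinates adapted both to the unknown $h$ and to the differential operator appearing in \eqref{Eq:dificil}. Set
\[
u = x_4-x_5,\qquad v = x_4-x_6,\qquad y = x_4+x_5+x_6.
\]
In these variables $h = h(u,v)$ depends only on $u,v$, so $\partial h/\partial x_5 = -h_u$; a direct computation also gives
\[
\partial_{x_4}+\partial_{x_5}+\partial_{x_6} = 3\,\partial_y,\qquad F_{123} = \tfrac{1}{3}\bigl(4(u^2-uv+v^2)-y^2\bigr),
\]
\[
x_4-x_5+x_6 = \tfrac{1}{3}(y+4u-2v).
\]
After clearing denominators, equation \eqref{Eq:dificil} becomes a first-order linear ODE in the single variable $y$, with $(u,v)$ as parameters, whose right-hand side $12\,h_u(u,v)$ is independent of $y$.

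Next I would expand $g = \sum_{j=0}^{N} g_j(u,v)\,y^j$ with $N = n-2$, where each $g_j(u,v)$ is itself a homogeneous polynomial in $(u,v)$, and compare coefficients of $y^m$. This produces a three-term recurrence of the form
\[
\bigl[8-3(k-1)(m-1)\bigr] g_{m-1} + 8(4u-2v)\,g_m + 12(k-1)(m+1)(u^2-uv+v^2)\,g_{m+1} = 12\,h_u\,\delta_{m,0},
\]
for $m = 0, 1, \ldots, N+1$, with the usual convention $g_{-1} = g_{N+1} = 0$. Since $k<1$, the scalar coefficient $8-3(k-1)(m-1) = 8+3(1-k)(m-1)$ is strictly positive for every $m \geq 1$. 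The equation at $m = N+1$ therefore forces $g_N = 0$; a downward induction on $m$, from $m = N$ down to $m = 1$, then successively forces $g_{N-1} = \cdots = g_0 = 0$, so $g \equiv 0$. The remaining equation at $m = 0$ reduces to $0 = 12\,h_u(u,v)$, i.e.\ $h_u \equiv 0$, meaning that $h$ is independent of $u = x_4-x_5$. This establishes both $g \equiv 0$ and $h = h(x_4-x_6)$.

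The principal obstacle is identifying the right coordinate change; once this is in place, the argument is almost mechanical. Without it, one would naturally attempt the method of characteristics as in Lemma~\ref{L:Eq.estrella}, but the inhomogeneous term $-\partial h/\partial x_5$ makes the resulting explicit integral formulas unwieldy to analyze for polynomiality. In the $(u,v,y)$ variables the PDE collapses to a single-variable ODE whose recurrence has scalar leading coefficients, and the hypothesis $k<1$ enters exactly where it is needed, ensuring these scalars are nonzero so that the vanishing of the top coefficient of $g$ propagates cleanly all the way down to $g_0$.
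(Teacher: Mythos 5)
Your proof is correct, and it takes a genuinely different --- and more elementary --- route than the paper's. The paper's argument writes $h=\sum_i a_i(x_4-x_5)^i(x_4-x_6)^{n-i}$, uses Mathematica to represent $g$ as $\frac{4}{1-k}\int h_5/F_{123}\,dx_4+f(x_4-x_5,x_4-x_6)$, factors $F_{123}=(x_4-A_1^2)(x_4-A_2^2)$ with $A_1=\sqrt{x_5}-\sqrt{x_6}$, $A_2=\sqrt{x_5}+\sqrt{x_6}$, and forces the logarithmic terms of the partial-fraction integral to vanish; an induction on the resulting relations $S_n=0$ kills $a_1,\dots,a_n$, so $h_5\equiv0$, and the equation then reduces to Lemma~\ref{L:Eq.estrella}, which gives $g\equiv0$. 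Your linear change of variables $(u,v,y)=(x_4-x_5,\,x_4-x_6,\,x_4+x_5+x_6)$ instead collapses the PDE to a first-order ODE in $y$ alone: I verified the identities $\partial_{x_4}+\partial_{x_5}+\partial_{x_6}=3\partial_y$, $F_{123}=\tfrac13\bigl(4(u^2-uv+v^2)-y^2\bigr)$, $x_4-x_5+x_6=\tfrac13(y+4u-2v)$, and your three-term recurrence, whose scalar coefficient $8+3(1-k)(m-1)$ is indeed positive for all $m\ge1$ when $k<1$, so the top coefficient $g_N$ vanishes and the downward induction propagates to $g_0$, after which the $m=0$ equation gives $h_u\equiv0$. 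This yields both conclusions simultaneously with no computer algebra, no square roots, and no appeal to Lemma~\ref{L:Eq.estrella}, and it sidesteps the paper's somewhat delicate ``the integral must be a polynomial'' step. What the paper's route buys is uniformity with the restriction-and-induction style used throughout Section~\ref{S:prova}; what yours buys is a short, self-contained, fully checkable argument. (One minor quibble: the positivity of $8+3(1-k)(m-1)$ for $m\ge1$ holds for any $k\le1$, so $k<1$ is sufficient here but not used as sharply as your closing remark suggests.)
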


\begin{proof}
Let $h=\sum\limits_{i=0}^na_i(x_4-x_5)^i(x_4-x_6)^{n-i}$ and
$g=\sum\limits_{i+j=0}^{n-2}b_{ij}x_4^ix_5^jx_6^{n-2-i-j}$. Suppose
that $g\not\equiv0$. Forcing that the solution of \eqref{Eq:dificil}
be a polynomial, Mathematica (see \cite{M}) shows that $g$ is of the
form
\[
g(x_4,x_5,x_6)=\frac{4}{1-k}\int\frac{{h}_5}{F_{123}}\,dx_4+f(x_4-x_5,x_4-x_6),
\]
where $f$ is a homogeneous polynomial, ${h}_5=\pd{h}{x_5}$ and the integral is to be a polynomial.

\smallskip

Let $A_1=\sqrt{x_5}-\sqrt{x_6}$ and $A_2=\sqrt{x_5}+\sqrt{x_6}$. Under this notation $F_{123}=(x_4-A_1^2)(x_4-A_2^2)$. The fraction inside the above integral can be written as
\[
\frac{{h}_5}{F_{123}}=X_0+\frac 1{A_1^2-A_2^2}\left(\frac{X_1}{x_4-A_1^2}-\frac{X_2}{x_4-A_2^2}\right),
\]
where $X_0=X_0(x_4,A_1,A_2)$, $X_1=X_1(A_1,A_2)$ and $X_2=X_2(A_1,A_2)$ are homogeneous polynomials. The integrals of the fractions in the right hand side with respect to $x_4$ are $X_i\log(x_4-A_i^2)$, $i=1,2$; hence $X_1$ and $X_2$ must be identically zero. $X_1=0$ and $X_2=0$ have the same solutions $a_1,\ldots,a_n$ because of symmetry. Indeed $X_1=0$ (or $X_2=0$) reduces to $S_n=0$, where
\[
S_n=\sum_{i=1}^n(3A_1-A_2)^{n-i}(A_1+A_2)^{n-i}(3A_1+A_2)^{i-1}(A_1-A_2)^{i-1}i\,a_i.
\]
We note that we have the recursive equality
\[
S_n=(3A_1-A_2)(A_1+A_2)S_{n-1}+(3A_1+A_2)^{n-1}(A_1-A_2)^{n-1}n\,a_n.
\]
On $A_1=-A_2$ (or equivalently on $x_5=0$) we have $n4^{n-1}A_2^{2n-2}a_n=0$, and hence we have $a_n=0$. Induction arguments prove that $S_n=0$ implies $a_1=\cdots=a_n=0$. Therefore $h_5=0$, which means that equation \eqref{Eq:dificil} is a particular case of equation \eqref{Eq:estrella} and then by Lemma \ref{L:Eq.estrella} we get $g\equiv0$ and then we are finished.
\end{proof}

\section{Proof of Theorem \ref{main}}\label{S:prova}

In this section we prove the four statements of Theorem \ref{main}.

\subsection{Proof of statement (a) of Theorem \ref{main}}

According to Table \ref{BianchiClassA} the Bianchi cosmological
model I is obtained for $n_1=n_2=n_3=0$. System \eqref{BB} becomes
\begin{equation} \label{BI}
\begin{split}
\dot{x}_1=&x_1(-x_4+x_5+x_6),\\
\dot{x}_2=&x_2 (x_4 - x_5 + x_6),\\
\dot{x}_3=&x_3 (x_4 + x_5 - x_6),\\
\dot{x}_4=&\dfrac{k-1}{4}F,\\
\dot{x}_5=&\dfrac{k-1}{4}F,\\
\dot{x}_6=&\dfrac{k-1}{4}F,
\end{split}
\end{equation}
where $F=(x_4^2 + x_5^2 + x_6^2 - 2 x_4 x_5 - 2x_5x_6-2x_4x_6)$.
Straightforward computations show that system \eqref{BI} has the
five first integrals $x_4-x_5$, $x_4-x_6$, $\mathcal H$ defined in \eqref{Hx},
\[
\left(\frac{x_1}{x_2}\right)^{\frac{1-k}2}\left(\frac{x_4+x_5+x_6-
2\sqrt{\Delta}}{x_4+x_5+x_6+2\sqrt{\Delta}}\right)_,^{\frac{x_4-x_5}
{\sqrt{\Delta}}}
\]
and
\[
\left(\frac{x_2}{x_3}\right)^{\frac{1-k}2}\left(\frac{x_4+x_5+x_6-
2\sqrt{\Delta}}{x_4+x_5+x_6+2\sqrt{\Delta}}\right)_,^{\frac{x_5-x_6}
{\sqrt{\Delta}}},
\]
with $\Delta=x_4^2+x_5^2+x_6^2-x_4x_5-x_5x_6-x_4x_6$. Note that the
five first integrals are independent. Statement (a) is proved.

\subsection{Prove of statement (b) of Theorem \ref{main}}

The Bianchi cosmological model II is obtained for $n_1=1$ and
$n_2=n_3=0$. System \eqref{BB} writes as
\begin{equation} \label{BII}
\begin{split}
\dot{x}_1=&x_1(-x_4+x_5+x_6),\cr
\dot{x}_2=&x_2 (x_4 - x_5 + x_6),\cr
\dot{x}_3=&x_3 (x_4 + x_5 - x_6),\cr
\dot{x}_4=&x_1^2+\frac{k-1}{4}F,\cr
\dot{x}_5=&\frac{k-1}{4}F,\cr
\dot{x}_6=&\frac{k-1}{4}F,
\end{split}
\end{equation}
where $F=x_1^2+x_4^2  + x_5^2 + x_6^2 - 2 x_4 x_5 -
2x_5x_6-2x_4x_6$.

\smallskip

Let $h=h(x_1,x_2,x_3,x_4,x_5,x_6)$ be a homogeneous polynomial first
integral of \eqref{BII}. Using Lemma \ref{L7} we can write
$h=h_1(x_2,x_3,x_4,x_5,x_6)+x_1^jg_1(x_1,x_2,x_3,x_4,x_5,x_6)$,
with $j\in\N$ and $h_1$ and $g_1$ homogeneous polynomials such that
$x_1\nmid g_1$. On $x_1=0$ system \eqref{BII} becomes
\begin{equation}\label{BIIx1}
\begin{split}
\dot{x}_2=&x_2 (x_4 - x_5 + x_6),\cr
\dot{x}_3=&x_3 (x_4 + x_5 - x_6),\cr
\dot{x}_4=&\frac{k-1}{4}F_1,\cr
\dot{x}_5=&\frac{k-1}{4}F_1,\cr
\dot{x}_6=&\frac{k-1}{4}F_1
\end{split}
\end{equation}
where $F_1=F|_{x_1=0}$. System \eqref{BIIx1} admits the two
polynomial first integrals $x_4-x_5$ and $x_5-x_6$ and the two
non-polynomial first integrals
\[
x_2^{\frac 32
(k-1)}F_1\left(\frac{x_4+x_5+x_6-2\sqrt{\Delta}}{x_4+x_5+x_6+
2\sqrt{\Delta}}\right)^{\frac{x_4-2x_5+x_6}{\sqrt{\Delta}}}
\]
and
\[
x_3^{\frac 32
(k-1)}F_1\left(\frac{x_4+x_5+x_6-2\sqrt{\Delta}}{x_4+x_5+x_6+
2\sqrt{\Delta}}\right)^{\frac{x_4+x_5-2x_6}{\sqrt{\Delta}}},
\]
where $\Delta=x_4^2+x_5^2+x_6^2-x_4x_5-x_4x_6-x_5x_6$. As these four
first integrals of system \eqref{BIIx1} are independent and $h_1$ is
a polynomial first integral of \eqref{BIIx1}, we have
$h_1=h_1(x_4-x_5,x_5-x_6)$.

\smallskip

The following lemma ends the proof of statement (b) of Theorem \ref{main}.

\begin{lemma}\label{L:g1}
For system \eqref{BII} we have that $h_1=h_1(x_5-x_6)$ and $g_1\equiv0$.
\end{lemma}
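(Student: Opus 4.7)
My plan is to suppose for contradiction that $g_1 \not\equiv 0$, so $h = h_1 + x_1^m g_1$ with $m \geq 1$ and $x_1 \nmid g_1$, and to extract a contradiction from $Dh = 0$, where $D$ denotes the vector field of \eqref{BII}. Since $h_1$ depends only on $u = x_4 - x_5$ and $v = x_5 - x_6$, the identity $\partial_4 h_1 + \partial_5 h_1 + \partial_6 h_1 = 0$ lets the common $\frac{k-1}{4}F$ summand in $\dot x_4, \dot x_5, \dot x_6$ drop out, and only the $x_1^2$ piece of $\dot x_4$ survives, giving $Dh_1 = x_1^2\,\partial_u h_1$. The invariance condition then becomes the master identity
\begin{equation*}
x_1^2\,\partial_u h_1 + m\,x_1^m(-x_4 + x_5 + x_6)\,g_1 + x_1^m\,Dg_1 = 0.
\end{equation*}
Setting $\tilde g_1 := g_1|_{x_1=0}$ (nonzero by assumption) and writing $D_0$ for the vector field of \eqref{BIIx1}, I compare coefficients of the lowest relevant power of $x_1$ and treat the cases $m \neq 2$ and $m = 2$ separately.

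If $m \neq 2$, the coefficient of $x_1^m$ reads $D_0 \tilde g_1 = m(x_4 - x_5 - x_6)\tilde g_1$. Expanding $\tilde g_1 = \sum_{i,j \geq 0} G_{ij}(x_4, x_5, x_6)\,x_2^i x_3^j$ and matching $x_2^i x_3^j$-coefficients, each $G_{ij}$ satisfies an equation of the shape \eqref{Eq:estrella} with $(a_1, a_2, a_3) = (i+j-m,\ j-i+m,\ i-j+m)$. Lemma \ref{L:Eq.estrella} kills $G_{ij}$ unless $a_1 = a_2 = a_3$, which forces $(i, j) = (m, m)$; at that single pair the three $a_\ell$ collapse to $m$ and the lemma does not apply, but the explicit general solution computed in its proof writes $G_{m,m}$ as a scalar multiple of $F_{123}^{2m/(k-1)}$ times a polynomial in $(x_4-x_5, x_4-x_6)$. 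The exponent $2m/(k-1)$ is not a nonnegative integer since $0 \leq k < 1$ and $m \geq 1$, so $G_{m,m} \equiv 0$, hence $\tilde g_1 \equiv 0$, contradicting $x_1 \nmid g_1$.

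If $m = 2$, the coefficient of $x_1^2$ picks up the extra source $\partial_u h_1$ and becomes
\begin{equation*}
\partial_u h_1 + 2(-x_4 + x_5 + x_6)\tilde g_1 + D_0 \tilde g_1 = 0.
\end{equation*}
Repeating the $(x_2, x_3)$-expansion permits nonzero $G_{ij}$ only when $(i, j) \in \{(0, 0), (2, 2)\}$; the $(2, 2)$-component dies by the same non-polynomial argument applied to $F_{123}^{4/(k-1)}$, so $\tilde g_1 = G_{00} =: \tilde g_{1,00}(x_4, x_5, x_6)$ and
\begin{equation*}
\partial_u h_1 + 2(-x_4 + x_5 + x_6)\tilde g_{1,00} + \frac{k-1}{4}F_{123}(\partial_4 + \partial_5 + \partial_6)\tilde g_{1,00} = 0.
\end{equation*}
The involution $x_4 \leftrightarrow x_5$ turns this into exactly equation \eqref{Eq:dificil} of Lemma \ref{L:Eq.dificil}: taking $g(x_4, x_5, x_6) := \tilde g_{1,00}(x_5, x_4, x_6)$ and $h(\xi, \eta) := h_1(-\xi, \eta)$, the factor $2(-x_4 + x_5 + x_6)$ becomes $2(x_4 - x_5 + x_6)$, $F_{123}$ and $\partial_4 + \partial_5 + \partial_6$ are preserved by the swap, and a short chain-rule check gives $\partial_5 h = [\partial_u h_1](x_5, x_4, x_6)$. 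Lemma \ref{L:Eq.dificil} then yields $g \equiv 0$ and $h = h(\eta)$, which translates back to $\tilde g_{1,00} \equiv 0$ and $\partial_u h_1 = 0$; again $\tilde g_1 \equiv 0$, contradicting $x_1 \nmid g_1$.

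Both branches being impossible, $g_1 \equiv 0$ and $h = h_1$. The invariance condition then reduces to $x_1^2\,\partial_u h_1 = 0$, so $\partial_u h_1 = 0$ and $h_1 = h_1(x_5 - x_6)$, completing the proof. The main obstacle is the $m = 2$ case: the extra source $\partial_u h_1$ matches the $\partial_5 h$ of Lemma \ref{L:Eq.dificil} only after the $x_4 \leftrightarrow x_5$ symmetry of the remaining operator, which is precisely why that auxiliary lemma is stated in the form it has.
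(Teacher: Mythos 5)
Your proof is correct and follows the same overall architecture as the paper's: split on the exponent $m$ of $x_1$ (the paper calls it $j$ and distinguishes $j=1$, $j>2$, $j=2$), reduce to a PDE in $(x_4,x_5,x_6)$ alone, dispose of the generic case with Lemma \ref{L:Eq.estrella}, and handle the critical case $m=2$ with Lemma \ref{L:Eq.dificil} after the $x_4\leftrightarrow x_5$ swap --- exactly the paper's use of both auxiliary lemmas, including the chain-rule verification that $\partial_u h_1$ becomes the $\partial h/\partial x_5$ source term of \eqref{Eq:dificil}. Where you differ is the middle reduction. The paper eliminates $x_2$ and $x_3$ by successively writing $\bar g_1=x_2^lg_2$, restricting to $x_2=0$, writing $\bar g_2=x_3^mg_3$, restricting to $x_3=0$, and contradicting only the leading component $\bar g_3$; you instead decompose $\tilde g_1$ into its full $(x_2,x_3)$-bidegree components $G_{ij}$ (legitimate here because $F_1=F_{123}$ does not involve $x_2,x_3$, so the operator preserves bidegree) and kill each component separately. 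The two reductions are equivalent in substance, but yours has a concrete advantage: it forces you to notice the single component $(i,j)=(m,m)$ where $a_1=a_2=a_3$ and the hypothesis $(a_1-a_2)^2+(a_1-a_3)^2\neq0$ of Lemma \ref{L:Eq.estrella} fails, and you dispose of it via the explicit solution $f\cdot F_{123}^{2m/(k-1)}$ with negative exponent. The paper's $j=1$ branch silently invokes Lemma \ref{L:Eq.estrella} even when $l=m=1$, where the coefficient is $x_4+x_5+x_6$ and the lemma does not apply as stated; your argument closes that small lacuna. The one point you could make fully explicit is the case where $2m/(k-1)$ is a negative integer: there you still need to observe that $F_{123}$ is irreducible and is not a polynomial in $x_4-x_5$, $x_4-x_6$ alone, so it cannot divide $f$; this is the same level of detail the paper itself elides in the proof of Lemma \ref{L:Eq.estrella}.
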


\begin{proof}
Suppose that $g_1\not\equiv0$. As $h$ is a first integral of \eqref{BII}, we have
\begin{equation}\label{g1}
\begin{split}
x_1^j&\left[j(-x_4+x_5+x_6)g_1+x_1(-x_4+x_5+x_6)\pd{g_1}{x_1}+x_2 (x_4 - x_5 + x_6)\pd{g_1}{x_2}\right.\\
&\left.+x_3 (x_4 + x_5 - x_6)\pd{g_1}{x_3}+x_1^2\pd{g_1}{x_4}+\frac{k-1}{4}F\left(\pd{g_1}{x_4}+\pd{g_1}{x_5}+\pd{g_1}{x_6}\right)\right]+x_1^2\pd{h_1}{x_4}=0.
\end{split}
\end{equation}

We distinguish three cases depending on the value of $j$. If $j=1$ then equation \eqref{g1} becomes
\[
\begin{split}
(-x_4&+x_5+x_6)g_1+x_1(-x_4+x_5+x_6)\pd{g_1}{x_1}+x_2 (x_4 - x_5 + x_6)\pd{g_1}{x_2}\\
&+x_3 (x_4 + x_5 - x_6)\pd{g_1}{x_3}+x_1^2\pd{g_1}{x_4}+\frac{k-1}{4}F\left(\pd{g_1}{x_4}+\pd{g_1}{x_5}+\pd{g_1}{x_6}\right)+x_1\pd{h_1}{x_4}=0.
\end{split}
\]

Let $\bar g_1=g_1|_{x_1=0}\not\equiv0$. Equation \eqref{g1} on $x_1=0$ can be written as
\[
\begin{split}
&(-x_4+x_5+x_6)\bar g_1+x_2 (x_4 - x_5 + x_6)\pd{\bar g_1}{x_2}\\
&+x_3 (x_4 + x_5 - x_6)\pd{\bar g_1}{x_3}+\frac{k-1}{4}F_1\left(
\pd{\bar g_1}{x_4}+\pd{\bar g_1}{x_5}+\pd{\bar g_1}{x_6}\right)=0.
\end{split}
\]
Write $\bar g_1=x_2^lg_2\not\equiv0$, with $l\in\N\cup\{0\}$ and $x_2\nmid g_2$. We get
\[
\begin{split}
((-x_4&+x_5+x_6)+l(x_4 - x_5 + x_6))g_2+x_2 (x_4 - x_5 + x_6)\pd{g_2}{x_2}\\
&+x_3 (x_4 + x_5 - x_6)\pd{g_2}{x_3}+ \frac{k-1}{4}F_1\left(\pd{g_2}{x_4}+ \pd{g_2}{x_5}+\pd{g_2}{x_6}\right)=0.
\end{split}
\]
Let $\bar g_2=g_2|_{x_2=0}\not\equiv0$. On $x_2=0$ we have
\[
\begin{split}
((-x_4&+x_5+x_6)+l(x_4 - x_5 + x_6))\bar g_2\\
&+x_3 (x_4 + x_5 -x_6)\pd{\bar g_2}{x_3}+\frac{k-1}{4}F_{12}\left(\pd{\bar g_2}{x_4}+\pd{\bar g_2}{x_5}+\pd{\bar g_2}{x_6}\right)=0,
\end{split}
\]
where $F_{12}=F_1|_{x_2=0}$. Now we write $\bar
g_2=x_3^mg_3\not\equiv0$, with $m\in\N\cup\{0\}$ and $x_3\nmid g_3$.
We obtain
\[
\begin{split}
((-x_4&+x_5+x_6)+l(x_4 - x_5 + x_6)+m(x_4 + x_5 - x_6))g_3\\
&+x_3 (x_4 + x_5 - x_6)\pd{g_3}{x_3}+\frac{k-1}{4}F_{12}\left(\pd{g_3}{x_4}+\pd{g_3}{x_5}+\pd{g_3}{x_6}\right)=0.
\end{split}
\]
Let $\bar g_3=g_3|_{x_3=0}\not\equiv0$. On $x_3=0$ we have
\[
\begin{split}
((-x_4&+x_5+x_6)+l(x_4 - x_5 + x_6)+m(x_4 + x_5 - x_6))\bar g_3\\
&+\frac{k-1}{4}F_{123}\left(\pd{\bar g_3}{x_4}+\pd{\bar g_3}{x_5}+\pd{\bar g_3}{x_6}\right)=0,
\end{split}
\]
where $F_{123}=F_{12}|_{x_3=0}$. Applying Lemma \ref{L:Eq.estrella} we get $\bar g_3\equiv0$, which is a contradiction. Hence we have
$g_1\equiv0$ and therefore $\pd{h_1}{x_4}\equiv0$. The lemma follows in this case.

\smallskip

If $j>2$ then $x_1\left|\pd{h_1}{x_4}\right.$, and then $\pd{h_1}{x_4}\equiv0$. Now we can proceed in a similar way as in the case $j=1$ to prove that $g_1\equiv0$ by using Lemma \ref{L:Eq.estrella}.

\smallskip

If $j=2$ then equation \eqref{g1} becomes
\[
\begin{split}
2(-x_4&+x_5+x_6)g_1+x_1(-x_4+x_5+x_6)\pd{g_1}{x_1}+x_2 (x_4 - x_5 + x_6)\pd{g_1}{x_2}\\
&+x_3 (x_4 + x_5 - x_6)\pd{g_1}{x_3}+x_1^2\pd{g_1}{x_4}+\frac{k-1}{4}F\left(\pd{g_1}{x_4}+\pd{g_1}{x_5}+\pd{g_1}{x_6}\right)+\pd{h_1}{x_4}=0.
\end{split}
\]

Let $\bar g_1=g_1|_{x_1=0}\not\equiv0$. Equation \eqref{g1} on $x_1=0$ can be written as
\[
\begin{split}
&2(-x_4+x_5+x_6)\bar g_1+x_2 (x_4 - x_5 + x_6)\pd{\bar g_1}{x_2}\\
&+x_3 (x_4 + x_5 - x_6)\pd{\bar g_1}{x_3}+\frac{k-1}{4}F_1\left(\pd{\bar g_1}{x_4}+\pd{\bar g_1}{x_5}+\pd{\bar g_1}{x_6}\right)+\pd{h_1}{x_4}=0.
\end{split}
\]
Write $\bar g_1=x_2^lg_2\not\equiv0$, with $l\in\N\cup\{0\}$ and $x_2\nmid g_2$. We get
\[
\begin{split}
(2(-x_4&+x_5+x_6)+l(x_4 - x_5 + x_6))g_2+x_2 (x_4 - x_5 + x_6)\pd{g_2}{x_2}\\
&+x_3 (x_4 + x_5 - x_6)\pd{g_2}{x_3}+ \frac{k-1}{4}F_1\left(\pd{g_2}{x_4}+ \pd{g_2}{x_5}+\pd{g_2}{x_6}\right)+\pd{h_1}{x_4}=0.
\end{split}
\]
If $l>0$ then $\pd{h_1}{x_4}\equiv0$. Similar arguments to those used above lead to the desired result after applying Lemma \ref{L:Eq.estrella}. If $l=0$, let $\bar g_2=g_2|_{x_2=0}\not\equiv0$. On $x_2=0$ we have
\[
2(-x_4+x_5+x_6)\bar g_2+x_3 (x_4 + x_5 -x_6)\pd{\bar g_2}{x_3}+\frac{k-1}{4}F_{12}\left(\pd{\bar g_2}{x_4}+\pd{\bar g_2}{x_5}+\pd{\bar g_2}{x_6}\right)+\pd{h_1}{x_4}=0,
\]
where $F_{12}=F_1|_{x_2=0}$. Now we write $\bar g_2=x_3^mg_3\not\equiv0$, with $m\in\N\cup\{0\}$ and $x_3\nmid g_3$.
We obtain
\[
\begin{split}
(2(-x_4&+x_5+x_6)+m(x_4 + x_5 - x_6))g_3\\
&+x_3 (x_4 + x_5 - x_6)\pd{g_3}{x_3}+\frac{k-1}{4}F_{12}\left(\pd{g_3}{x_4}+\pd{g_3}{x_5}+\pd{g_3}{x_6}\right)+\pd{h_1}{x_4}=0.
\end{split}
\]
If $m>0$ then $\pd{h_1}{x_4}\equiv0$. Again the usual arguments lead to the desired result after applying Lemma \ref{L:Eq.estrella}. If $m=0$, let $\bar g_3=g_3|_{x_3=0}\not\equiv0$. On $x_3=0$ we have
\[
2(-x_4+x_5+x_6)\bar g_3+\frac{k-1}{4}F_{123}\left(\pd{\bar g_3}{x_4}+\pd{\bar g_3}{x_5}+\pd{\bar g_3}{x_6}\right)+\pd{h_1}{x_4}=0,
\]
where $F_{123}=F_{12}|_{x_3=0}$. Applying Lemma \ref{L:Eq.dificil} swapping $x_4$ and $x_5$ we get $\bar g_3\equiv0$, which is a contradiction. Hence we have
$g_1\equiv0$ and therefore $\pd{h_1}{x_4}\equiv0$. The lemma follows also in this case.

\end{proof}

After Lemma \ref{L:g1}, $h=h(x_5-x_6)$. Hence statement (b) of Theorem \ref{main} follows.

\subsection{Proof of statement (c) of Theorem \ref{main}}

According to Table \ref{BianchiClassA}, system \eqref{BB} in cases
VI$_0$ and VII$_0$  can be written as
\begin{equation} \label{B67}
\begin{split}
\dot{x}_1&=x_1(-x_4+x_5+x_6),\\
\dot{x}_2&=x_2 (x_4 - x_5 + x_6),\\
\dot{x}_3&=x_3 (x_4 + x_5 - x_6),\\
\dot{x}_4&=x_1(x_1-n_2x_2)+\frac{k-1}{4}F,\\
\dot{x}_5&=n_2x_2(-x_1+n_2x_2)+\frac{k-1}{4}F,\\
\dot{x}_6&=\frac{k-1}{4}F,
\end{split}
\end{equation}
where $F=(x_1-n_2x_2)^2+x_4^2+ x_5^2 + x_6^2 - 2 x_4 x_5   -2x_5x_6-2x_4x_6$ and $n_2^2=1$. Suppose that system \eqref{B67} has a homogeneous
polynomial first integral $h(x_1,\ldots,x_6)$. From Lemma \ref{L7}
we can write $h=h_1(x_2,\ldots,x_6)+x_1^jg_1(x_1,\ldots,x_6)$, with
$j\in\N$ and $h_1$ and $g_1$ homogeneous polynomials such that
$x_1\nmid g_1$. System \eqref{B67} on $x_1=0$ is
\begin{equation} \label{B67_1}
\begin{split}
\dot{x}_2&=x_2 (x_4 - x_5 + x_6),\\
\dot{x}_3&=x_3 (x_4 + x_5 - x_6),\\
\dot{x}_4&=\frac{k-1}{4}F_1,\\
\dot{x}_5&=x_2^2+\frac{k-1}{4}F_1,\\
\dot{x}_6&=\frac{k-1}{4}F_1,
\end{split}
\end{equation}
where $F_1=F|_{x_1=0}$. We note that $h_1$ is a first integral of
system \eqref{B67_1}. From Lemma \ref{L7} we can write
$h_1=h_2(x_3,\ldots,x_6)+x_2^lg_2(x_2,\ldots,x_6)$, with $l\in\N$
and $h_2$ and $g_2$ homogeneous polynomials such that $x_2\nmid
g_2$. System \eqref{B67_1} on $x_2=0$ writes
\begin{equation} \label{B67_12}
\begin{split}
\dot{x}_3&=x_3 (x_4 + x_5 - x_6),\\
\dot{x}_4&=\frac{k-1}{4}F_{12},\\
\dot{x}_5&=\frac{k-1}{4}F_{12},\\
\dot{x}_6&=\frac{k-1}{4}F_{12},
\end{split}
\end{equation}
where $F_{12}=F_1|_{x_2=0}$. We note that $h_2$ is a first integral
of system \eqref{B67_12}. Straightforward computations show that
system \eqref{B67_12} has the three independent first integrals
$x_4-x_5$, $x_5-x_6$ and
\[
x_3^{\frac 32 (k-1)}F_{12}\left(\frac{x_4+x_5+ x_6- 2\sqrt{\Delta}}
{x_4+x_5+x_6+2\sqrt{\Delta}}\right)_,^{\frac{x_4+x_5-2x_6}
{\sqrt{\Delta}}},
\]
where $\Delta=x_4^2+x_5^2+x_6^2-x_4x_5-x_4x_6-x_5x_6$. Therefore
$h_2=h_2(x_4-x_5,x_4-x_6)$.

\begin{lemma}\label{L:g2h2}
For system \eqref{B67_1} we have that $h_2=h_2(x_4-x_6)$ and
$g_2\equiv0$.
\end{lemma}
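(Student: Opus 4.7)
The plan is to adapt the three-step argument used in the proof of Lemma \ref{L:g1} to the system \eqref{B67_1}: substitute the ansatz, restrict successively to $x_2=0$ and $x_3=0$, and then invoke the auxiliary lemmas. Plugging $h_1=h_2(x_4-x_5,x_4-x_6)+x_2^l g_2$ into $\dot h_1=0$ and using the identity $\pd{h_2}{x_4}+\pd{h_2}{x_5}+\pd{h_2}{x_6}=0$, which annihilates the common factor $(k-1)F_1/4$ shared by $\dot x_4$, $\dot x_5$ and $\dot x_6$, the condition collapses to
\[
x_2^l\,\mathcal L(g_2)+x_2^2\,\pd{h_2}{x_5}=0,
\]
where $\mathcal L(g_2)$ is a first-order linear operator on $g_2$ of the same structural form as the bracket appearing in the proof of Lemma \ref{L:g1}. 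The extra $x_2^2$ term in $\dot x_5$ is what leaves the isolated $\pd{h_2}{x_5}$.

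I argue by contradiction, assuming $g_2\not\equiv0$, and split on $l$. For $l\ge 3$, the above identity rewrites as $\pd{h_2}{x_5}=-x_2^{l-2}\mathcal L(g_2)$; since $\pd{h_2}{x_5}$ is independent of $x_2$, this forces $\pd{h_2}{x_5}\equiv0$, whence $h_2=h_2(x_4-x_6)$ and the equation reduces to $\mathcal L(g_2)=0$. For $l=1$, dividing by $x_2$ and restricting to $x_2=0$ makes the $\pd{h_2}{x_5}$-term disappear and yields the same form of reduced equation for $\bar g_2:=g_2|_{x_2=0}$. In both cases I write $\bar g_2=x_3^m g_3$ with $x_3\nmid g_3$, restrict to $x_3=0$, and check that the coefficient vector $(a_1,a_2,a_3)$ of $(x_4,x_5,x_6)$ in the resulting first-order PDE for $\bar g_3:=g_3|_{x_3=0}$ always satisfies $(a_1-a_2)^2+(a_1-a_3)^2\neq 0$. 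Lemma \ref{L:Eq.estrella} then gives $\bar g_3\equiv 0$, contradicting $x_3\nmid g_3$.

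The delicate case is $l=2$, where the two powers of $x_2$ coincide and $\pd{h_2}{x_5}$ cannot be eliminated by divisibility. After dividing by $x_2^2$, restricting to $x_2=0$, and writing $\bar g_2=x_3^m g_3$ with $x_3\nmid g_3$, the subcase $m\ge 1$ is again handled by divisibility: the terms involving $\bar g_2$ carry a factor $x_3$, while $\pd{h_2}{x_5}$ does not, so $\pd{h_2}{x_5}\equiv 0$ and the rest reduces to Lemma \ref{L:Eq.estrella} as before. The remaining subcase $l=2$, $m=0$ is the one where the argument is genuinely subtler: restricting to $x_3=0$ produces exactly
\[
2(x_4-x_5+x_6)\bar g_3+\frac{k-1}{4}F_{123}\left(\pd{\bar g_3}{x_4}+\pd{\bar g_3}{x_5}+\pd{\bar g_3}{x_6}\right)+\pd{h_2}{x_5}=0,
\]
which is the hypothesis of Lemma \ref{L:Eq.dificil} with $g=\bar g_3$ and $h=h_2$. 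That lemma simultaneously yields $h_2=h_2(x_4-x_6)$ and $\bar g_3\equiv 0$, and the latter again contradicts $x_3\nmid g_3$.

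The main obstacle I anticipate is precisely the $l=2$, $m=0$ configuration, since this is the only place where the asymmetric $x_2^2$ contribution from $\dot x_5$ leaves a surviving trace all the way down to the $x_2=x_3=0$ slice and cannot be discarded by naive divisibility. Lemma \ref{L:Eq.dificil} was engineered for this precise situation, so once the case bookkeeping is organised so that its hypothesis appears verbatim, the proof closes. Once $g_2\equiv 0$ is established in every case, the original identity reduces to $x_2^2\,\pd{h_2}{x_5}=0$, giving $h_2=h_2(x_4-x_6)$ and completing the lemma.
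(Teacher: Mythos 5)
Your proposal follows essentially the same route as the paper's proof: the same decomposition $h_1=h_2+x_2^lg_2$, the same case split on $l$ (with divisibility arguments for $l\neq 2$ and for $m\geq 1$), the same successive restrictions to $x_2=0$ and $x_3=0$ leading to Lemma \ref{L:Eq.estrella}, and the same identification of the $l=2$, $m=0$ subcase as the one requiring Lemma \ref{L:Eq.dificil}. The argument is correct as stated.
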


\begin{proof}
Suppose that $g_2\not\equiv0$. As $h_1=h_2+x_2^lg_2$ is a first
integral of system \eqref{B67_1}, we can write
\begin{equation}\label{Eq:VIx2}
\begin{split}
x_2^l&\left[l(x_4-x_5+x_6)g_2+x_2(x_4-x_5+x_6)\pd{g_2}{x_2}+
x_3(x_4+x_5-x_6)\pd{g_2}{x_3}\right.\\
&\left.+x_2^2\pd{g_2}{x_5}+\frac{k-1}4F_1\left(\pd{g_2}{x_4}+
\pd{g_2}{x_5}+\pd{g_2}{x_6}\right)\right]+x_2^2\pd{h_2}{x_5}=0.
\end{split}
\end{equation}
We distinguish three cases depending on the value of $l$. If $l=1$
then equation \eqref{Eq:VIx2} becomes
\[
\begin{split}
(x_4&-x_5+x_6)g_2+x_2(x_4-x_5+x_6)\pd{g_2}{x_2}+x_3(x_4+x_5-x_6)
\pd{g_2}{x_3}\\
&+\frac{k-1}4F_1\left(\pd{g_2}{x_4}+\pd{g_2}{x_5}+\pd{g_2}{x_6}\right)
+x_2\pd{h_2}{x_5}+x_2^2\pd{g_2}{x_5}=0.
\end{split}
\]
Let $\bar g_2=g_2|_{x_2=0}\not\equiv0$. On $x_2=0$ we have
\[
(x_4-x_5+x_6)\bar g_2+x_3(x_4+x_5-x_6)\pd{\bar g_2}{x_3}+\frac{k-1}4F_{12}\left(\pd{\bar g_2}{x_4}+\pd{\bar g_2}{x_5}+\pd{\bar g_2}{x_6}\right)=0.
\]
Write $\bar g_2=x_3^mg_3\not\equiv0$, with $m\in\N\cup\{0\}$ and
$x_3\nmid g_3$. Then
\[
\begin{split}
((x_4&-x_5+x_6)+m(x_4+x_5-x_6))g_3+x_3(x_4+x_5-x_6)\pd{g_3}{x_3}\\
&+\frac{k-1}4F_{12}\left(\pd{g_3}{x_4}+\pd{g_3}{x_5}+
\pd{g_3}{x_6}\right)=0.
\end{split}
\]
Let $\bar g_3=g_3|_{x_3=0}\not\equiv0$. On $x_3=0$ we get
\[
((x_4-x_5+x_6)+m(x_4+x_5-x_6))\bar
g_3+\frac{k-1}4F_{123}\left(\pd{\bar g_3}{x_4}+\pd{\bar
g_3}{x_5}+\pd{\bar g_3}{x_6}\right)=0,
\]
where $F_{123}=F_{12}|_{x_3=0}$. Applying Lemma \ref{L:Eq.estrella}
we obtain $\bar g_3\equiv0$, which is a contradiction. Hence
$g_2\equiv0$. Back to equation \eqref{Eq:VIx2} we have
$\pd{h_2}{x_5}\equiv0$. Then the lemma follows.

\smallskip

If $l>2$, then from equation \eqref{Eq:VIx2} we have that
$x_2\left|\pd{h_2}{x_5}\right.$ and thus $\pd{h_2}{x_5}\equiv0$.
Therefore $h_2=h_2(x_4-x_6)$. Now we can proceed in a similar way as
in the case $l=1$ to obtain the equation
\[
(l(x_4-x_5+x_6)+m(x_4+x_5-x_6))\bar
g_3+\frac{k-1}4F_{123}\left(\pd{\bar g_3}{x_4}+\pd{\bar
g_3}{x_5}+\pd{\bar g_3}{x_6}\right)=0.
\]
Applying again Lemma \ref{L:Eq.estrella} we arrive to contradiction
and hence $g_2\equiv0$.

\smallskip

If $l=2$, then equation \eqref{Eq:VIx2} writes as
\[
\begin{split}
2(x_4&-x_5+x_6)g_2+x_2(x_4-x_5+x_6)\pd{g_2}{x_2}+x_3(x_4+
x_5-x_6)\pd{g_2}{x_3}\\
&+\frac{k-1}4F_1\left(\pd{g_2}{x_4}+\pd{g_2}{x_5}+\pd{g_2}
{x_6}\right)+x_2^2\pd{g_2}{x_5}+\pd{h_2}{x_5}=0.
\end{split}
\]

Let $\bar g_2=g_2|_{x_2=0}\not\equiv0$. On $x_2=0$ we have
\[
2(x_4-x_5+x_6)\bar g_2+x_3(x_4+x_5-x_6)\pd{\bar g_2}{x_3}+\frac{k-1}4F_{12}\left(\pd{\bar g_2}{x_4}+\pd{\bar g_2}{x_5}+\pd{\bar g_2}{x_6}\right)+\pd{h_2}{x_5}=0.
\]
Write $\bar g_2=x_3^mg_3\not\equiv0$, with $m\in\N\cup\{0\}$ and
$x_3\nmid g_3$. Then
\[
\begin{split}
x_3^m&\left[(2(x_4-x_5+x_6)+m(x_4+x_5-x_6))g_3+x_3(x_4+x_5-x_6)
\pd{g_3}{x_3}\right.\\
&\left.+\frac{k-1}4F_{12}\left(\pd{g_3}{x_4}+\pd{g_3}{x_5}+\
pd{g_3}{x_6}\right)\right]+\pd{h_2}{x_5}=0.
\end{split}
\]

We distinguish two cases depending on the value of $m$. If $m>0$
then $x_3\left|\pd{h_2}{x_5}\right.$. Hence $\pd{h_2}{x_5}\equiv0$
and $h_2=h_2(x_4-x_6)$. Now let $\bar g_3=g_3|_{x_3=0}\not\equiv0$.
On $x_3=0$ we obtain
\[
(2(x_4-x_5+x_6)+m(x_4+x_5-x_6))\bar
g_3+\frac{k-1}4F_{123}\left(\pd{\bar g_3}{x_4}+\pd{\bar
g_3}{x_5}+\pd{\bar g_3}{x_6}\right)=0.
\]
Applying Lemma \ref{L:Eq.estrella} we get a contradiction and hence
$g_2\equiv0$.

\smallskip

If $m=0$, let $\bar g_3=g_3|_{x_3=0}\not\equiv0$. On $x_3=0$ we
obtain
\[
2(x_4-x_5+x_6)\bar g_3+\frac{k-1}4F_{123}\left(\pd{\bar
g_3}{x_4}+\pd{\bar g_3}{x_5}+\pd{\bar
g_3}{x_6}\right)+\pd{h_2}{x_5}=0.
\]
Applying Lemma \ref{L:Eq.dificil} we get $\pd{h_2}{x_5}\equiv0$ and
$\bar g_3\equiv0$, hence the lemma follows.

\smallskip

All the subcases are finished and therefore the lemma is proved.
\end{proof}

After Lemma \ref{L:g2h2} we have that
$h=h_2(x_4-x_6)+x_1^jg_1(x_1,\ldots,x_6)$, with $j\in\N$ and
$x_1\nmid g_1$. We recall that $h$ is a first integral of system
\eqref{B67}. Thus
\begin{equation}\label{Eq:h1}
\begin{split}
x_1^j&\left[j(-x_4+x_5+x_6)g_1+x_1(-x_4+x_5+x_6)\pd{g_1}{x_1}\right.\\
&+x_2(x_4-x_5+x_6)\pd{g_1}{x_2}+x_3(x_4+x_5-x_6)\pd{g_1}{x_3}+x_1
(x_1-n_2x_2)\pd{g_1}{x_4}\\
&\left.-n_2x_2(x_1-n_2x_2)\pd{g_1}{x_5}+\frac{k-1}4F\left(\pd{g_1}{x_4}+
\pd{g_1}{x_5}+\pd{g_1}{x_6}\right)\right]+x_1(x_1-n_2x_2)\pd{h_2}{x_4}=0.
\end{split}
\end{equation}

\begin{lemma}\label{L:g1h2}
For system \eqref{B67} we have that $h_2\equiv0$ and $g_1\equiv0$.
\end{lemma}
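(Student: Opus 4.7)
The strategy mirrors the cascade used in the proof of Lemma \ref{L:g2h2}, with an additional outer layer coming from the factorization $h = h_2(x_4-x_6)+x_1^j g_1$. The plan is to suppose $g_1\not\equiv 0$ and derive a contradiction; once $g_1\equiv 0$ is established, substituting back into \eqref{Eq:h1} leaves $x_1(x_1-n_2x_2)\pd{h_2}{x_4}\equiv 0$, which immediately forces $h_2\equiv 0$.

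If $j\geq 2$, dividing \eqref{Eq:h1} by $x_1$ and restricting to $x_1=0$ kills the entire bracketed expression (it still carries a factor $x_1$), leaving $-n_2x_2\pd{h_2}{x_4}=0$; since $n_2^2=1$, this already yields $h_2\equiv 0$. With $h_2$ eliminated, the remaining PDE for $g_1$ is attacked by the same cascade as in Lemma \ref{L:g2h2}: write $g_1|_{x_1=0}=x_2^lg_2$ with $x_2\nmid g_2$, then $g_2|_{x_2=0}=x_3^mg_3$ with $x_3\nmid g_3$, and compute the equation satisfied by $\bar g_3=g_3|_{x_3=0}$. One lands in \eqref{Eq:estrella} with $(a_1,a_2,a_3)=(-j+l+m,\,j-l+m,\,j+l-m)$. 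When $(l,m)\neq(j,j)$, Lemma \ref{L:Eq.estrella} forces $\bar g_3\equiv 0$; in the degenerate case $l=m=j$, a direct integration along the characteristic $\zeta=x_4$, $\xi=x_4-x_5$, $\eta=x_4-x_6$ produces $\bar g_3=F_{123}^{2j/(k-1)}f(\xi,\eta)$, which fails to be a polynomial because $2j/(k-1)<0$. Either way $\bar g_3\equiv 0$, contradicting $x_3\nmid g_3$.

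If $j=1$ the source $-n_2x_2\pd{h_2}{x_4}$ survives $x_1=0$, and the argument must keep track of it. Again write $g_1|_{x_1=0}=x_2^lg_2$ with $x_2\nmid g_2$. For $l\geq 2$ only this source contributes at order $x_2^1$, forcing $\pd{h_2}{x_4}\equiv 0$ and reducing to the previous case. For $l=0$ the source vanishes at $x_2=0$ and the subsequent cascade is finished by Lemma \ref{L:Eq.estrella} (since then $a_1-a_2=-2\neq 0$). The delicate branch is $l=1$, where one further writes $g_2|_{x_2=0}=x_3^mg_3$: the sub-case $m\geq 1$ again kills the source on $x_3=0$, whereas $m=0$ yields the residual equation
\[
2x_6\bar g_3+\frac{k-1}{4}F_{123}\left(\pd{\bar g_3}{x_4}+\pd{\bar g_3}{x_5}+\pd{\bar g_3}{x_6}\right)-n_2\pd{h_2}{x_4}=0.
\]
This is the main obstacle: the linear form $2x_6$ is not of the shape $2(\pm x_4\pm x_5\pm x_6)$ treated in Lemma \ref{L:Eq.dificil}, so that lemma cannot be quoted directly.

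To dispose of this residual equation I would re-run the characteristic argument behind Lemma \ref{L:Eq.dificil} in the coordinates $\zeta=x_4,\ \xi=x_4-x_5,\ \eta=x_4-x_6$, in which $\pd{}{x_4}+\pd{}{x_5}+\pd{}{x_6}=\pd{}{\zeta}$ and the equation becomes a first-order linear ODE in $\zeta$ with $\xi,\eta$ as parameters. Expanding $\bar g_3$ as a polynomial of degree $d$ in $\zeta$ and inspecting the $\zeta^{d+1}$ coefficient gives $(8-3(k-1)d)a_d=0$, which for $0\leq k<1$ forces $a_d=0$ and hence $d=0$. Substituting $\bar g_3=a_0(\xi,\eta)$ back then leaves $2(\zeta-\eta)a_0=n_2\pd{h_2}{x_4}$; the $\zeta$-independence of the right-hand side forces $a_0\equiv 0$ and $\pd{h_2}{x_4}\equiv 0$. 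In every branch we thus reach $\bar g_3\equiv 0$, contradicting $x_3\nmid g_3$; hence $g_1\equiv 0$ and then $h_2\equiv 0$.
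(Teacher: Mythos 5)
Your proposal is correct and follows the paper's own architecture almost exactly: the same case split on $j$, the same cascade ($x_1=0$, then $\bar g_1=x_2^lg_2$, then $x_2=0$, then $\bar g_2=x_3^mg_3$, then $x_3=0$), and the same use of Lemma \ref{L:Eq.estrella} to kill $\bar g_3$ in all but the last branch, with $h_2\equiv0$ recovered at the end from $x_1(x_1-n_2x_2)\pd{h_2}{x_4}=0$. You deviate in two places, both local. First, for the residual equation \eqref{Eq:h24} (the branch $j=l=1$, $m=0$) the paper does not run your characteristic-coordinate computation: it restricts \eqref{Eq:h24} to $x_6=0$, where the term $2x_6\bar g_3$ vanishes and $F_{123}$ becomes $(x_4-x_5)^2$, so the surviving identity forces $n_2a_0nx_4^{n-1}$ to be divisible by $(x_4-x_5)^2$, whence $a_0=0$; after that \eqref{Eq:h24} is of type \eqref{Eq:estrella} and Lemma \ref{L:Eq.estrella} finishes. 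Your degree-in-$\zeta$ argument reaches the same conclusion and is equally valid; only the phrase ``hence $d=0$'' is slightly off, since $(8-3(k-1)d)a_d=0$ kills the top coefficient for \emph{every} $d\ge0$ and so gives $\bar g_3\equiv0$ outright, but this is harmless. (You are also right that Lemma \ref{L:Eq.dificil} cannot be quoted for the linear form $2x_6$; the paper does not quote it there either.) Second, you explicitly treat the degenerate case $l=m=j$, where $(a_1,a_2,a_3)=(j,j,j)$ and the hypothesis of Lemma \ref{L:Eq.estrella} fails; the paper cites that lemma in those branches without comment, and your observation that the solution is then $F_{123}^{2j/(k-1)}f(\xi,\eta)$ with negative exponent, hence not a polynomial, is exactly the patch needed to make those citations airtight. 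So: same route, with a different (valid) endgame for \eqref{Eq:h24} and a slightly more careful handling of the degenerate coefficients.
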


\begin{proof}
Suppose that $g_1\not\equiv0$. We distinguish two cases depending on
the value of $j$. If $j>1$ then from equation \eqref{Eq:h1} we have
that $x_1\left|\pd{h_2}{x_4}\right.$, and hence $h_2\equiv0$.
Therefore equation \eqref{Eq:h1} can be written as
\[
\begin{split}
j(-x_4&+x_5+x_6)g_1+x_1(-x_4+x_5+x_6)\pd{g_1}{x_1}+x_2
(x_4-x_5+x_6)\pd{g_1}{x_2}\\
&+x_3(x_4+x_5-x_6)\pd{g_1}{x_3}+x_1(x_1-n_2x_2)\pd{g_1}{x_4}
-n_2x_2(x_1-n_2x_2)\pd{g_1}{x_5}\\
&+\frac{k-1}4F\left(\pd{g_1}{x_4}+\pd{g_1}{x_5}+\pd{g_1}
{x_6}\right)=0.
\end{split}
\]
Let $\bar g_1=g_1|_{x_1=0}\not\equiv0$. Equation \eqref{Eq:h1} on
$x_1=0$ becomes
\[
\begin{split}
j(-x_4&+x_5+x_6)\bar g_1+x_2(x_4-x_5+x_6)\pd{\bar g_1}{x_2}+
x_3(x_4+x_5-x_6)\pd{\bar g_1}{x_3}\\
&+x_2^2\pd{\bar g_1}{x_5}+\frac{k-1}4F_1\left(\pd{\bar g_1}{x_4}+
\pd{\bar g_1}{x_5}+\pd{\bar g_1}{x_6}\right)=0.
\end{split}
\]
Write $\bar g_1=x_2^lg_2\not\equiv0$, with $l\in\N\cup\{0\}$ and
$x_2\nmid g_2$. We get
\[
\begin{split}
(j(-x_4&+x_5+x_6)+l(x_4-x_5+x_6))g_2+x_2(x_4-x_5+x_6)\pd{g_2}{x_2}\\
&+x_3(x_4+x_5-x_6)\pd{g_2}{x_3}+x_2^2\pd{g_2}{x_5}+\frac{k-1}4F_1\left(\pd{g_2}{x_4}+\pd{g_2}
{x_5}+\pd{g_2}{x_6}\right)=0.
\end{split}
\]
Let $\bar g_2=g_2|_{x_2=0}\not\equiv0$. Then, on $x_2=0$ we have
\[
\begin{split}
(j(-x_4&+x_5+x_6)+l(x_4-x_5+x_6))\bar g_2+x_3(x_4+x_5-x_6)\pd{\bar
g_2}{x_3}\\ &+\frac{k-1}4F_{12}\left(\pd{\bar g_2}{x_4}+\pd{\bar
g_2}{x_5}+\pd{\bar g_2}{x_6}\right)=0.
\end{split}
\]
Now write $\bar g_2=x_3^mg_3\not\equiv0$, with $m\in\N\cup\{0\}$ and
$x_3\nmid g_3$. We get
\[
\begin{split}
(j(-x_4&+x_5+x_6)+l(x_4-x_5+x_6)+m(x_4+x_5-x_6))g_3\\
&+x_3(x_4+x_5-x_6)\pd{g_3}{x_3}+\frac{k-1}4F_{12}\left(
\pd{g_3}{x_4}+\pd{g_3}{x_5}+\pd{g_3}{x_6}\right)=0.
\end{split}
\]
Let $\bar g_3=g_3|_{x_3=0}\not\equiv0$. Then, on $x_3=0$ we have
\[
\begin{split}
(j(-x_4&+x_5+x_6)+l(x_4-x_5+x_6)+m(x_4+x_5-x_6))\bar g_3\\
&+\frac{k-1}4F_{123}\left(\pd{\bar g_3}{x_4}+\pd{\bar
g_3}{x_5}+\pd{\bar g_3}{x_6}\right)=0.
\end{split}
\]
Applying Lemma \ref{L:Eq.estrella} we obtain $\bar g_3\equiv0$, a
contradiction. Hence $g_1\equiv0$ and the lemma follows in this
case.

\smallskip

If $j=1$ then equation \eqref{Eq:h1} becomes
\[
\begin{split}
(-x_4&+x_5+x_6)g_1+x_1(-x_4+x_5+x_6)\pd{g_1}{x_1}+x_2
(x_4-x_5+x_6)\pd{g_1}{x_2}\\
&+x_3(x_4+x_5-x_6)\pd{g_1}{x_3}+x_1(x_1-n_2x_2)\pd{g_1}{x_4}\\
&-n_2x_2(x_1-n_2x_2)\pd{g_1}{x_5}+\frac{k-1}4F\left(\pd{g_1}{x_4}
+\pd{g_1}{x_5}+\pd{g_1}{x_6}\right)+(x_1-n_2x_2)\pd{h_2}{x_4}=0.
\end{split}
\]
Let $\bar g_1=g_1|_{x_1=0}\not\equiv0$. On $x_1=0$ we have
\[
\begin{split}
(-x_4&+x_5+x_6)\bar g_1+x_2(x_4-x_5+x_6)\pd{\bar
g_1}{x_2}+x_3(x_4+x_5-x_6)\pd{\bar g_1}{x_3}\\ &+x_2^2\pd{\bar
g_1}{x_5}+\frac{k-1}4F_1\left(\pd{\bar g_1}{x_4}+\pd{\bar
g_1}{x_5}+\pd{\bar g_1}{x_6}\right)-n_2x_2\pd{h_2}{x_4}=0.
\end{split}
\]
Write $\bar g_1=x_2^lg_2\not\equiv0$, with $l\in\N\cup\{0\}$ and
$x_2\nmid g_2$. We get
\begin{equation}\label{Eq:x2}
\begin{split}
x_2^l&\left[((-x_4+x_5+x_6)+l(x_4-x_5+x_6))g_2+x_2(x_4-x_5+x_6)
\pd{g_2}{x_2}\right.\\
&\left.+x_3(x_4+x_5-x_6)\pd{g_2}{x_3}+x_2^2\pd{g_2}{x_5}+\frac{k-1}
 4F_1\left(\pd{g_2}{x_4}+\pd{g_2}{x_5}+\pd{g_2}{x_6}\right)\right]\\
&-n_2x_2\pd{h_2}{x_4}=0.
\end{split}
\end{equation}

We distinguish three cases depending on the value of $l$. If $l>1$
then $x_2\left|\pd{h_2}{x_4}\right.$. Hence $h_2\equiv0$. Thus, from
\eqref{Eq:x2},
\[
\begin{split}
((-x_4&+x_5+x_6)+l(x_4-x_5+x_6))g_2+x_2(x_4-x_5+x_6)\pd{g_2}{x_2}\\
&+x_3(x_4+x_5-x_6)\pd{g_2}{x_3}+x_2^2\pd{g_2}{x_5}+\frac{k-1}4F_1\left(\pd{g_2}{x_4}+\pd{g_2}{x_5}+\pd{g_2}{x_6}\right)=0.
\end{split}
\]
Similar arguments to those used before lead to an equation of type
\eqref{Eq:estrella} and hence applying Lemma \ref{L:Eq.estrella} we
get a contradiction. Therefore $g_1\equiv0$ and the lemma follows.

\smallskip

If $l=0$ then we can use the same arguments to arrive from equation
\eqref{Eq:x2} to an equation of type \eqref{Eq:estrella}, and hence
applying Lemma \ref{L:Eq.estrella} we get a contradiction. Therefore
$g_1\equiv0$ and $h_2\equiv0$, so the lemma follows.

\smallskip

It only remains to consider the case $l=1$. Let $\bar
g_2=g_2|_{x_2=0}\not\equiv0$. From equation \eqref{Eq:x2} on $x_2=0$
we have
\[
2x_6\bar g_2+x_3(x_4+x_5-x_6)\pd{\bar g_2}{x_3}+\frac{k-1}4F_{12}\left(\pd{\bar g_2}{x_4}+\pd{\bar g_2}{x_5}+
\pd{\bar g_2}{x_6}\right)-n_2\pd{h_2}{x_4}=0.
\]
Write $\bar g_2=x_3^mg_3\not\equiv0$, with $m\in\N\cup\{0\}$ and
$x_3\nmid g_3$. We get:
\[
\begin{split}
x_3^m&\left[(2x_6+m(x_4+x_5-x_6))g_3+x_3(x_4+x_5-x_6)
\pd{g_3}{x_3}\right.\\
&\left.+\frac{k-1}4F_{12}\left(\pd{g_3}{x_4}+\pd{g_3}{x_5}+
\pd{g_3}{x_6}\right)\right]-n_2\pd{h_2}{x_4}=0.
\end{split}
\]
If $m>0$ then $x_3\left|\pd{h_2}{x_4}\right.$, and hence
$h_2\equiv0$. Therefore we obtain an equation of type
\eqref{Eq:estrella} and hence by Lemma \ref{L:Eq.estrella} we get
$g_1\equiv0$. If $m=0$, let $\bar g_3=g_3|_{x_3=0}\not\equiv0$. On
$x_3=0$, we have
\begin{equation}\label{Eq:h24}
2x_6\bar g_3+\frac{k-1}4F_{123}\left(\pd{\bar g_3}{x_4}+\pd{\bar
g_3}{x_5}+\pd{\bar g_3}{x_6}\right)-n_2\pd{h_2}{x_4}=0.
\end{equation}
As $h_2=h_2(x_4-x_6)$ is a homogeneous polynomial of degree $n$, we
have $h_2=a_0(x_4-x_6)^n$. Thus $\pd{h_2}{x_4}=a_0n(x_4-x_6)^{n-1}$.
On $x_6=0$ equation \eqref{Eq:h24} writes
\[
\frac{k-1}4(x_4-x_5)^2\left.\left(\pd{\bar g_3}{x_4}+\pd{\bar
g_3}{x_5}+\pd{\bar
g_3}{x_6}\right)\right|_{x_6=0}-n_2a_0nx_4^{n-1}=0.
\]
Therefore we must take $a_0=0$ and hence $h_2\equiv0$. Now equation
\eqref{Eq:h24} is of type \eqref{Eq:estrella} and hence by Lemma
\ref{L:Eq.estrella} we get $g_1\equiv0$ and the lemma follows.
\end{proof}

After Lemma \ref{L:g1h2} statement (c) of Theorem \ref{main} is
proved, as it follows that $h\equiv0$.

\subsection{Proof of statement (d) of Theorem \ref{main}}

According to Table \ref{BianchiClassA}, Bianchi cases VIII and IX
correspond to $n_1=n_2=n_3^2=1$ and can be written into the form
\begin{equation} \label{VIII-IX}
\begin{split}
\dot{x}_1&=x_1(-x_4+x_5+x_6),\\
\dot{x}_2&=x_2 (x_4 - x_5 + x_6),\\
\dot{x}_3&=x_3 (x_4 + x_5 - x_6),\\
\dot{x}_4&=x_1(x_1-x_2-n_3x_3)+\frac{k-1}{4}F,\\
\dot{x}_5&=x_2(-x_1+x_2-n_3x_3)+\frac{k-1}{4}F,\\
\dot{x}_6&=n_3x_3(-x_1-x_2+n_3x_3) +\frac{k-1}{4}F,
\end{split}
\end{equation}
where $F=x_1^2+x_2^2+x_3^2-2x_1x_2-2n_3x_1x_3-2n_3x_2x_3+x_4^2+ x_5^2 + x_6^2 - 2 x_4 x_5   - 2x_5x_6-2x_4x_6$ and $n_3^2=1$. Let $h=h(x_1,\cdots,x_6)$ be a homogeneous polynomial
first integral of degree $n$ of system \eqref{VIII-IX}. Write
$h=h_1(x_2,\cdots,x_6)+x_1^jg_1(x_1,\cdots,x_6)$, with $j\in\N$,
$h_1$ and $g_1$ homogeneous polynomials and $x_1\nmid g_1$. System
\eqref{VIII-IX} on $x_1=0$ becomes
\begin{equation} \label{VIII-IXx1}
\begin{split}
\dot{x}_2&=x_2 (x_4 - x_5 + x_6),\\
\dot{x}_3&=x_3 (x_4 + x_5 - x_6),\\
\dot{x}_4&=\frac{k-1}{4}F_1,\\
\dot{x}_5&=x_2(x_2-n_3x_3)+\frac{k-1}{4}F_1,\\
\dot{x}_6&=-n_3x_3(x_2-n_3x_3) +\frac{k-1}{4}F_1,
\end{split}
\end{equation}
where $F_1=F|_{x_1=0}$. System \eqref{VIII-IXx1} admits
$h_1=h_1(x_2,\cdots,x_6)$ as first integral. Write
$h_1=h_2(x_3,\cdots,x_6)+x_2^lg_2(x_2,\cdots,x_6)$, with $l\in\N$,
$h_2$ and $g_2$ homogeneous polynomials and $x_2\nmid g_2$. System
\eqref{VIII-IXx1} on $x_2=0$ becomes
\begin{equation} \label{VIII-IXx12}
\begin{split}
\dot{x}_3&=x_3 (x_4 + x_5 - x_6),\\
\dot{x}_4&=\frac{k-1}{4}F_{12},\\
\dot{x}_5&=\frac{k-1}{4}F_{12},\\
\dot{x}_6&=x_3^2 +\frac{k-1}{4}F_{12},
\end{split}
\end{equation}
where $F_{12}=F_1|_{x_2=0}$. We note that $h_2=h_2(x_3,\cdots,x_6)$
is a first integral of system \eqref{VIII-IXx12}. Write
$h_2=h_3(x_4,x_5,x_6)+x_3^mg_3(x_3,x_4,x_5,x_6)$, with $m\in\N$,
$h_3$ and $g_3$ homogeneous polynomials and $x_3\nmid g_3$. System
\eqref{VIII-IXx12} on $x_3=0$ is
\begin{equation} \label{VIII-IXx123}
\begin{split}
\dot{x}_4&=\frac{k-1}{4}F_{123},\\
\dot{x}_5&=\frac{k-1}{4}F_{123},\\
\dot{x}_6&=\frac{k-1}{4}F_{123},
\end{split}
\end{equation}
where $F_{123}=F_{12}|_{x_3=0}$. Note that $h_3$ is a polynomial
first integral of system \eqref{VIII-IXx123}. Since system
\eqref{VIII-IXx123} admits the two independent first integrals
$x_4-x_5$ and $x_5-x_6$, any polynomial first integral of
\eqref{VIII-IXx123} must be a polynomial in the variables $x_4-x_5$
and $x_5-x_6$. Therefore $h_3=h_3(x_4-x_5, x_5-x_6)$.

\smallskip

The next three lemmas end the proof of statement (d) of Theorem
\ref{main}. The first one shows that $h_2=h_2(x_4-x_5)$.

\begin{lemma}\label{L:g3h2}
For system \eqref{VIII-IXx12}we have that $g_3\equiv 0$ and
$h_3=h_3(x_4-x_5)$.
\end{lemma}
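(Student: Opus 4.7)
The plan is to substitute the decomposition $h_2 = h_3(x_4,x_5,x_6) + x_3^m g_3(x_3,x_4,x_5,x_6)$ into the first-integral condition for system \eqref{VIII-IXx12}, use the identity $F_{12} = x_3^2 + F_{123}$, and then match powers of $x_3$. The first-integral condition reads
\[
x_3(x_4+x_5-x_6)\pd{h_2}{x_3} + \frac{k-1}{4}(x_3^2 + F_{123})\left(\pd{h_2}{x_4}+\pd{h_2}{x_5}+\pd{h_2}{x_6}\right) + x_3^2 \pd{h_2}{x_6} = 0.
\]
Evaluating at $x_3=0$ gives $\pd{h_3}{x_4}+\pd{h_3}{x_5}+\pd{h_3}{x_6}=0$, which is automatic from $h_3=h_3(x_4-x_5,x_5-x_6)$ and will be used below.

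Assume for contradiction that $g_3\not\equiv 0$, and set $\bar g_3 = g_3|_{x_3=0}\not\equiv 0$. I would distinguish the cases $m=1$, $m>2$, and $m=2$. For $m=1$, matching the coefficient of $x_3^1$ gives
\[
(x_4+x_5-x_6)\bar g_3 + \frac{k-1}{4} F_{123}\left(\pd{\bar g_3}{x_4}+\pd{\bar g_3}{x_5}+\pd{\bar g_3}{x_6}\right)=0,
\]
which is equation \eqref{Eq:estrella} with $(a_1,a_2,a_3)=(1,1,-1)$; since $(a_1-a_3)^2=4\neq 0$, Lemma \ref{L:Eq.estrella} yields $\bar g_3\equiv 0$, a contradiction. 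For $m>2$, the coefficient of $x_3^m$ has the same form with $(a_1,a_2,a_3)=(m,m,-m)$, and the same lemma gives the same contradiction.

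The main obstacle is the case $m=2$, where the $x_3^2$ coefficient mixes contributions from both $h_3$ and $\bar g_3$. Using $(\partial_4+\partial_5+\partial_6)h_3=0$ it reduces to
\[
2(x_4+x_5-x_6)\bar g_3 + \frac{k-1}{4} F_{123}\left(\pd{\bar g_3}{x_4}+\pd{\bar g_3}{x_5}+\pd{\bar g_3}{x_6}\right)+\pd{h_3}{x_6}=0.
\]
Since $x_5-x_6=(x_4-x_6)-(x_4-x_5)$, the polynomial $h_3(x_4-x_5,x_5-x_6)$ is also a polynomial in $(x_4-x_6,x_4-x_5)$. Because $F_{123}$ and the operator $\partial_4+\partial_5+\partial_6$ are symmetric in $x_5$ and $x_6$, the equation above is exactly \eqref{Eq:dificil} after the relabeling $x_5\leftrightarrow x_6$; applying Lemma \ref{L:Eq.dificil} in this swapped form forces $\bar g_3\equiv 0$, contradicting $\bar g_3\not\equiv 0$.

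Therefore $g_3\equiv 0$ in every case. The PDE then collapses to $\tfrac{k-1}{4}(x_3^2+F_{123})(\partial_4+\partial_5+\partial_6)h_3 + x_3^2\partial_6 h_3=0$; matching the coefficients of $x_3^0$ and $x_3^2$ gives $(\partial_4+\partial_5+\partial_6)h_3=0$ and $\partial_6 h_3=0$. Combined with $h_3=h_3(x_4-x_5,x_5-x_6)$, this yields $h_3=h_3(x_4-x_5)$, which finishes the lemma.
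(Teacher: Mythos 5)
Your proof is correct and follows essentially the same route as the paper's: the same case split on $m$ ($m=1$, $m>2$, $m=2$), with Lemma \ref{L:Eq.estrella} disposing of the first two cases and Lemma \ref{L:Eq.dificil} applied with $x_5\leftrightarrow x_6$ swapped handling $m=2$. Your only departures are cosmetic: you make explicit why $\left(\pd{}{x_4}+\pd{}{x_5}+\pd{}{x_6}\right)h_3=0$, and you extract $\pd{h_3}{x_6}=0$ uniformly at the end rather than case by case.
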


\begin{proof}
Suppose that $g_3\not\equiv0$. We recall that $h_2=h_3(x_4-x_5,
x_5-x_6)+ x_3^mg_3(x_3,x_4,x_5,x_6)$, where $m\in\N$ and $x_3\nmid
g_3$. As $h_2$ is a first integral of system \eqref{VIII-IXx12}, we
have
\begin{equation}\label{VIII1}
\begin{split}
x_3^m&\left[m(x_4 + x_5 - x_6)g_3+x_3(x_4 + x_5 - x_6)\pd{g_3}{x_3}+
x_3^2\pd{g_3}{x_6}\right.\\
&\left.+\frac{k-1}{4}F_{12}\left(\pd{g_3}{x_4}+\pd{g_3}{x_5}+\pd{g_3}
{x_6}\right)\right]+x_3^2\pd{h_3}{x_6}=0.
\end{split}
\end{equation}
We distinguish three cases depending on the value of $m$.

\smallskip
If $m=1$ then
\[
\begin{split}
(x_4 &+ x_5 - x_6)g_3+x_3(x_4 + x_5 -
x_6)\pd{g_3}{x_3}+x_3^2\pd{g_3}{x_6}\\
&+\frac{k-1}{4}F_{12}\left(\pd{g_3}{x_4}+\pd{g_3}{x_5}+\pd{g_3}{x_6}
\right)+x_3\pd{h_3}{x_6}=0.
\end{split}
\]
Let $\bar g_3=g_3|_{x_3=0}\not\equiv0$. On $x_3=0$ we have
\[
(x_4 + x_5 - x_6)\bar g_3+\frac{k-1}{4}F_{123}\left(\pd{\bar
g_3}{x_4}+\pd{\bar g_3}{x_5}+\pd{\bar g_3}{x_6}\right)=0.
\]
Applying Lemma \ref{L:Eq.estrella} we get $\bar g_3\equiv0$ and
hence $g_3\equiv0$. Consequently $\pd{h_3}{x_6}=0$ and the lemma
follows in this case.

\smallskip

If $m>2$ then from \eqref{VIII1} we have
$x_3\left|\pd{h_3}{x_6}\right.$ and so $h_3=h_3(x_4-x_5)$. Now from
equation \eqref{VIII1} on $x_3=0$ we get an equation of type
\eqref{Eq:estrella}, hence applying Lemma \ref{L:Eq.estrella} we get
$g_3\equiv0$ and the lemma follows in this case.

\smallskip

If $m=2$ then from \eqref{VIII1} we have
\[
\begin{split}
2(x_4 &+ x_5 - x_6)g_3+x_3(x_4 + x_5 - x_6)\pd{g_3}{x_3}+x_3^2
\pd{g_3}{x_6}\\
&+\frac{k-1}{4}F_{12}\left(\pd{g_3}{x_4}+\pd{g_3}{x_5}+\pd{g_3}
{x_6}\right)+\pd{h_3}{x_6}=0.
\end{split}
\]
Let $\bar g_3=g_3|_{x_3=0}\not\equiv0$. On $x_3=0$ we obtain
\[
2(x_4 + x_5 - x_6)\bar g_3+\frac{k-1}{4}F_{123}\left(\pd{\bar
g_3}{x_4}+\pd{\bar g_3}{x_5}+\pd{\bar
g_3}{x_6}\right)+\pd{h_3}{x_6}=0.
\]
Applying Lemma \ref{L:Eq.dificil} swapping $x_5$ and $x_6$ we get $\pd{h_3}{x_6}=0$ and $\bar
g_3\equiv0$. Hence $h_3=h_3(x_4-x_5)$, $g_3\equiv0$ and the lemma
follows in this case.
\end{proof}

The second lemma shows that $h_1\equiv0$.

\begin{lemma}\label{L:g2+h2}
For system \eqref{VIII-IXx1} we have that $g_2\equiv 0$ and
$h_2\equiv0$.
\end{lemma}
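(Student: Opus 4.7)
The plan is to run the same kind of cascade as in Lemmas \ref{L:g1} and \ref{L:g1h2}, peeling off monomial factors in $x_2$ and then in $x_3$ and reducing everything to an equation on the slice $\{x_2=x_3=0\}$ to which Lemmas \ref{L:Eq.estrella} or \ref{L:Eq.dificil} can be applied. Recall from Lemma \ref{L:g3h2} that $h_2=h_2(x_4-x_5)$, so write $h_2=a_0(x_4-x_5)^n$. Assume $g_2\not\equiv 0$. Substituting $h_1=h_2+x_2^l g_2$ into the condition that $h_1$ be a first integral of \eqref{VIII-IXx1} and using that $\dot x_4-\dot x_5=-x_2(x_2-n_3x_3)$ on that system, the $h_2$-contribution collapses to $-x_2(x_2-n_3x_3)h_2'(x_4-x_5)$, leaving a master equation of the form
\[
x_2^l[\mathcal{L}g_2]=x_2(x_2-n_3x_3)\,h_2'(x_4-x_5),
\]
where $\mathcal{L}$ denotes the Lie derivative along \eqref{VIII-IXx1}.

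The first split is on the exponent $l$. If $l\geq 2$, then $x_2^2$ divides the right-hand side, so $x_2\mid n_3x_3\, h_2'$ and hence $h_2'\equiv 0$; thus $h_2\equiv 0$. The residual homogeneous equation is then handled by restricting successively to $x_2=0$ and $x_3=0$, pulling out a factor $x_3^m$, and applying Lemma \ref{L:Eq.estrella} to the equation for $\bar g_3=g_3'|_{x_3=0}$. For $l=1$ one divides by $x_2$ and restricts to $x_2=0$; writing $\bar g_2=x_3^mg_3'$ with $x_3\nmid g_3'$, the case $m=0$ is immediate by Lemma \ref{L:Eq.estrella} with coefficient triple $(1,-1,1)$, while the case $m\geq 2$ forces $h_2'\equiv 0$ by the same divisibility argument before applying Lemma \ref{L:Eq.estrella}.

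The main obstacle is the subcase $l=1$, $m=1$. Dividing by $x_3$ and restricting to $x_3=0$ leaves
\[
2x_4\bar g_3+\frac{k-1}{4}F_{123}\left(\pd{\bar g_3}{x_4}+\pd{\bar g_3}{x_5}+\pd{\bar g_3}{x_6}\right)=-n_3\,h_2'(x_4-x_5),
\]
whose coefficient $2x_4$ is not of the form handled by Lemmas \ref{L:Eq.estrella} or \ref{L:Eq.dificil} directly, even up to permutation of $x_4,x_5,x_6$. Imitating the trick used at the very end of Lemma \ref{L:g1h2}, I would restrict once more to $x_4=0$: since $F_{123}|_{x_4=0}=(x_5-x_6)^2$, the left-hand side acquires $(x_5-x_6)^2$ as a factor, while the right-hand side becomes $-n_3 n a_0(-1)^{n-1}x_5^{n-1}$, a polynomial in $x_5$ alone. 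Divisibility of the latter by $(x_5-x_6)^2$, checked for instance by setting $x_6=x_5$, forces $a_0=0$, hence $h_2\equiv 0$.

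With $h_2\equiv 0$ the remaining equation is homogeneous with coefficient triple $(2,0,0)$, which satisfies the non-degeneracy condition $(a_1-a_2)^2+(a_1-a_3)^2\neq 0$. A final appeal to Lemma \ref{L:Eq.estrella} produces $\bar g_3\equiv 0$, the contradiction that closes this case. Every branch having been dispatched, $g_2\equiv 0$; but then $h_1=h_2(x_4-x_5)$ would itself be a first integral of \eqref{VIII-IXx1}, and since $\dot x_4-\dot x_5=-x_2(x_2-n_3x_3)\not\equiv 0$ this forces $h_2\equiv 0$, completing the proof of Lemma \ref{L:g2+h2}.
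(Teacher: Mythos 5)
Your proof is correct and follows essentially the same route as the paper's: the same splitting $h_1=h_2(x_4-x_5)+x_2^l g_2$, the same case analysis on $l$ and then on the exponent $m$ of $x_3$, and in the hardest subcase $l=m=1$ the same restriction to $x_4=0$ (where $F_{123}$ becomes $(x_5-x_6)^2$) to force $a_0=0$, followed by Lemma \ref{L:Eq.estrella} with coefficient triple $(2,0,0)$. The only cosmetic difference is that you phrase the $h_2$-contribution via $\dot x_4-\dot x_5=-x_2(x_2-n_3x_3)$ rather than via $\pd{h_2}{x_5}$, which is equivalent.
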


\begin{proof}
Suppose that $g_2\not\equiv0$. We recall that
$h_1=h_2(x_4-x_5)+x_2^lg_2$, with $l\in\N$ and $x_2\nmid g_2$. As
$h_1$ is a first integral of system \eqref{VIII-IXx1}, we have
\[
\begin{split}
x_2^l&\left[l(x_4-x_5+x_6)g_2+x_2(x_4-x_5+x_6)\pd{g_2}{x_2}+
x_3(x_4+x_5-x_6)\pd{g_2}{x_3}+x_2(x_2-n_3x_3)\pd{g_2}{x_5}\right.\\
&-n_3x_3(x_2-n_3x_3)\pd{g_2}{x_6}\left.+\frac{k-1}{4}F_1\left(\pd{g_2}{x_4}+\pd{g_2}{x_5}+
\pd{g_2}{x_6}\right)\right]+x_2(x_2-n_3x_3)\pd{h_2}{x_5}=0.
\end{split}
\]
We distinguish two cases depending on the value of $l$. If $l>1$
then $x_2\left|\pd{h_2}{x_5}\right.$ and hence
$\pd{h_2}{x_5}\equiv0$, which means that $h_2\equiv0$. Substituting
in the equation above we have
\[
\begin{split}
l(x_4&-x_5+x_6)g_2+x_2(x_4-x_5+x_6)\pd{g_2}{x_2}+x_3(x_4+x_5-x_6)
\pd{g_2}{x_3}\\
&+x_2(x_2-n_3x_3)\pd{g_2}{x_5}-n_3x_3(x_2-n_3x_3)\pd{g_2}{x_6}+\frac{k-1}{4}F_1\left(\pd{g_2}{x_4}+\pd{g_2}{x_5}+\pd{g_2}{x_6}
\right)=0.
\end{split}
\]
The usual arguments lead to equation \eqref{Eq:estrella}, hence we
obtain $g_2\equiv0$ by Lemma \ref{L:Eq.estrella}.

\smallskip

If $l=1$ then we have
\[
\begin{split}
(x_4&-x_5+x_6)g_2+x_2(x_4-x_5+x_6)\pd{g_2}{x_2}+x_3(x_4+x_5-x_6)
\pd{g_2}{x_3}+x_2(x_2-n_3x_3)\pd{g_2}{x_5}\\
&-n_3x_3(x_2-n_3x_3)\pd{g_2}{x_6}+\frac{k-1}{4}F_1\left(\pd{g_2}{x_4}+\pd{g_2}{x_5}+\pd{g_2}{x_6}
\right)+(x_2-n_3x_3)\pd{h_2}{x_5}=0.
\end{split}
\]
Let $\bar g_2=g_2|_{x_2=0}\not\equiv0$. On $x_2=0$ we have
\begin{equation}\label{Eq:x2=0}
\begin{split}
(x_4&-x_5+x_6)\bar g_2+x_3(x_4+x_5-x_6)\pd{\bar g_2}{x_3}+
x_3^2\pd{\bar g_2}{x_6}\\
&+\frac{k-1}{4}F_{12}\left(\pd{\bar g_2}{x_4}+\pd{\bar g_2}{x_5}+
\pd{\bar g_2}{x_6}\right)-n_3x_3\pd{h_2}{x_5}=0.
\end{split}
\end{equation}
Write $\bar g_2=x_3^mg_3\not\equiv0$, with $m\in\N\cup\{0\}$ and
$x_3\nmid g_3$. Then
\[
\begin{split}
x_3^m&\left[((x_4-x_5+x_6)+m(x_4+x_5-x_6))g_3+x_3(x_4+x_5-x_6)
\pd{g_3}{x_3}\right.\\
&\left.+x_3^2\pd{g_3}{x_6}+\frac{k-1}{4}F_{12}\left(\pd{g_3}{x_4}+
\pd{g_3}{x_5}+\pd{g_3}{x_6}\right)\right]-n_3x_3\pd{h_2}{x_5}=0.
\end{split}
\]
Now we distinguish three cases depending on the value of $m$. If
$m=0$ then we are in \eqref{Eq:x2=0} again and the usual arguments
lead to $g_2\equiv0$ and $h_2\equiv0$.

\smallskip

If $m>1$ then $x_3\left|\pd{h_2}{x_5}\right.$ and hence
$\pd{h_2}{x_5}\equiv0$, which means that $h_2\equiv0$. Then we have
\[
\begin{split}
((x_4&-x_5+x_6)+m(x_4+x_5-x_6))g_3+x_3(x_4+x_5-x_6)\pd{g_3}{x_3}\\
&+x_3^2\pd{g_3}{x_6}+\frac{k-1}{4}F_{12}\left(\pd{g_3}{x_4}+
\pd{g_3}{x_5}+\pd{g_3}{x_6}\right)=0.
\end{split}
\]
The usual arguments finish the proof in this case.

\smallskip

Finally if $m=1$ then we have
\[
2x_4g_3+x_3(x_4+x_5-x_6)\pd{g_3}{x_3}+x_3^2\pd{g_3}{x_6}+\frac{k-1}{4}F_{12}\left(\pd{g_3}{x_4}+\pd{g_3}{x_5}+\pd{g_3}{x_6}\right)-n_3\pd{h_2}{x_5}=0.
\]
Let $\bar g_3=g_3|_{x_3=0}\not\equiv0$. On $x_3=0$ we have
\[
2x_4\bar g_3+\frac{k-1}{4}F_{123}\left(\pd{\bar g_3}{x_4}+\pd{\bar
g_3}{x_5}+\pd{\bar g_3}{x_6}\right)-n_3\pd{h_2}{x_5}=0.
\]
As $h_2=h_2(x_4-x_5)$ is a homogeneous polynomial of degree $n$, we
have $h_2=a_0(x_4-x_5)^n$. Hence
\[
2x_4\bar g_3+\frac{k-1}{4}F_{123}\left(\pd{\bar g_3}{x_4}+\pd{\bar
g_3}{x_5}+\pd{\bar g_3}{x_6}\right)+n_3a_0n(x_4-x_5)^{n-1}=0.
\]
On $x_4=0$ we have
\[
\frac{k-1}{4}(x_5-x_6)^2\left.\left(\pd{\bar g_3}{x_4}+\pd{\bar
g_3}{x_5}+\pd{\bar
g_3}{x_6}\right)\right|_{x_4=0}+n_3a_0n(-x_5)^{n-1}=0,
\]
which means that $a_0=0$. Therefore $h_2\equiv0$. The equation is
now of type \eqref{Eq:estrella} and leads to $g_2\equiv0$ by Lemma
\ref{L:Eq.estrella}.

\smallskip

All the subcases are considered and the proof of the lemma is
finished.
\end{proof}

The last lemma shows that $g_1\equiv0$ and therefore that
$h\equiv0$.

\begin{lemma}\label{L:g1=0}
For system \eqref{VIII-IX} we have that $g_1\equiv 0$.
\end{lemma}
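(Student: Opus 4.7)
By Lemma~\ref{L:g2+h2} we have $h_2\equiv0$ and $g_2\equiv0$, so $h_1\equiv0$ and consequently $h=x_1^j g_1$ with $j\in\N$ and $x_1\nmid g_1$. The aim is to show $g_1\equiv0$, which will give $h\equiv0$ and complete statement~(d). I would argue by contradiction, assuming $g_1\not\equiv0$. Substituting $h=x_1^j g_1$ into $\dot h=0$ and cancelling the overall factor $x_1^j$ produces a linear first-order PDE for $g_1$, analogous to \eqref{Eq:h1} but without the $x_1(x_1-x_2-n_3x_3)\pd{h_2}{x_4}$ forcing term (since $h_2\equiv0$); the coefficient of $g_1$ itself is $j(-x_4+x_5+x_6)$.

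Following the three-step peeling used in Lemmas~\ref{L:g1h2} and~\ref{L:g2+h2}, I would successively restrict to $x_1=0$, $x_2=0$ and $x_3=0$, introducing $\bar g_1=g_1|_{x_1=0}=x_2^l g_2$ with $l\in\N\cup\{0\}$ and $x_2\nmid g_2$, then $\bar g_2=g_2|_{x_2=0}=x_3^m g_3$ with $m\in\N\cup\{0\}$ and $x_3\nmid g_3$, and finally $\bar g_3=g_3|_{x_3=0}\not\equiv0$. Since there is no forcing term to track, the bookkeeping is straightforward; the only mild subtlety is that $n_3^2=1$ collapses the coefficient of $\pd{g_2}{x_6}$ on $\{x_2=0\}$ into $x_3^2$, which vanishes on $\{x_3=0\}$. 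The reduced equation for $\bar g_3(x_4,x_5,x_6)$ is exactly of the form \eqref{Eq:estrella} with
\[
a_1=-j+l+m,\qquad a_2=j-l+m,\qquad a_3=j+l-m,
\]
so that $(a_1-a_2)^2+(a_1-a_3)^2=4\bigl((l-j)^2+(m-j)^2\bigr)$.

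Whenever $(l,m)\neq(j,j)$ the non-degeneracy hypothesis of Lemma~\ref{L:Eq.estrella} is satisfied and directly yields $\bar g_3\equiv0$, contradicting $x_3\nmid g_3$. The hard part will be the degenerate case $l=m=j\geq 1$, in which all three $a_i$ collapse to $j$ and Lemma~\ref{L:Eq.estrella} does not apply as stated. Here I would return to the explicit general solution derived inside the proof of Lemma~\ref{L:Eq.estrella}: with $a_1=a_2=a_3=j$ one has $\Delta_2=0$ and, using $(-u+2\sqrt\Delta)(u+2\sqrt\Delta)=4\Delta-u^2=3F_{123}$, the solution collapses to
\[
\bar g_3=f(x_4-x_5,\,x_4-x_6)\,(3F_{123})^{\Delta_1},\qquad \Delta_1=\frac{2j}{k-1}.
\]
For $0\leq k<1$ and $j\geq 1$ the exponent $\Delta_1$ is strictly negative, and since $F_{123}$ depends nontrivially on $x_4+x_5+x_6$ while $f$ does not, $\bar g_3$ is a polynomial only when $f\equiv0$. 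Thus $\bar g_3\equiv0$ in this case as well, giving the desired contradiction and forcing $g_1\equiv0$ and $h\equiv0$.
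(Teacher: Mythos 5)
Your proposal follows the paper's proof of this lemma essentially verbatim: the same three-step peeling $x_1=0$, $x_2=0$, $x_3=0$ with $\bar g_1=x_2^lg_2$, $\bar g_2=x_3^mg_3$, reducing to an equation of type \eqref{Eq:estrella} with $a_1=-j+l+m$, $a_2=j-l+m$, $a_3=j+l-m$, and then Lemma \ref{L:Eq.estrella}. The one point where you go beyond the paper is the degenerate case $l=m=j$, where $(a_1-a_2)^2+(a_1-a_3)^2=0$ and Lemma \ref{L:Eq.estrella} does not literally apply; the paper's proof simply says ``we can apply Lemma \ref{L:Eq.estrella}'' without excluding this case. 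Your patch is correct: there $\Delta_2=0$ automatically and $\Delta_1=2j/(k-1)<0$ for $j\geq1$ and $0\leq k<1$, so $\Delta_1\notin\N$ and the general solution $f(x_4-x_5,x_4-x_6)(3F_{123})^{\Delta_1}$ cannot be a nonzero polynomial (indeed $F_{123}$ depends nontrivially on $x_4+x_5+x_6$ while $f$ does not, so no cancellation is possible). This makes your version slightly more complete than the paper's.
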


\begin{proof}
Suppose that $g_1\not\equiv0$. We recall that $h=x_1^jg_1$, with
$j\in\N$ and $x_1\nmid g_1$, is a first integral of system
\eqref{VIII-IX}. Then
\[
\begin{split}
j(-x_4&+x_5+x_6)g_1+x_1(-x_4+x_5+x_6)\pd{g_1}{x_1}+x_2 (x_4 - x_5 +
x_6)\pd{g_1}{x_2}\\
&+x_3 (x_4 + x_5 -
x_6)\pd{g_1}{x_3}+x_1(x_1-x_2-n_3x_3)\pd{g_1}{x_4}\\
&+x_2(-x_1+x_2-n_3x_3)\pd{g_1}{x_5}+n_3x_3(-x_1-x_2+n_3x_3)\pd{g_1}{x_6}\\
&+\frac{k-1}{4}F\left(\pd{g_1}{x_4}+\pd{g_1}{x_5}+\pd{g_1}{x_6}\right)=0.
\end{split}
\]
Let $\bar g_1=g_1|_{x_1=0}\not\equiv0$. On $x_1=0$ we have
\[
\begin{split}
j(-x_4&+x_5+x_6)\bar g_1+x_2 (x_4 - x_5 + x_6)\pd{\bar g_1}{x_2}+x_3
(x_4 + x_5 - x_6)\pd{\bar g_1}{x_3}\\
&+x_2(x_2-n_3x_3)\pd{\bar g_1}{x_5}+n_3x_3(-x_2+n_3x_3)\pd{\bar
g_1}{x_6}+\frac{k-1}{4}F_1\left(\pd{\bar g_1}{x_4}+\pd{\bar
g_1}{x_5}+\pd{\bar g_1}{x_6}\right)=0.
\end{split}
\]
Write $\bar g_1=x_2^lg_2\not\equiv0$, with $l\in\N\cup\{0\}$ and
$x_2\nmid g_2$. We get
\[
\begin{split}
(j(-x_4&+x_5+x_6)+l(x_4 - x_5 + x_6))g_2+x_2 (x_4 - x_5 +
x_6)\pd{g_2}{x_2}\\
&+x_3 (x_4 + x_5 - x_6)\pd{g_2}{x_3}+x_2(x_2-n_3x_3)\pd{g_2}{x_5}\\
&+n_3x_3(-x_2+n_3x_3)\pd{g_2}{x_6}+\frac{k-1}{4}F_1\left(\pd{g_2}{x_4}
+\pd{g_2}{x_5}+\pd{g_2}{x_6}\right)=0.
\end{split}
\]
Let $\bar g_2=g_2|_{x_2=0}\not\equiv0$. On $x_2=0$ we have
\[
\begin{split}
(j(-x_4&+x_5+x_6)+l(x_4 - x_5 + x_6))\bar g_2+x_3 (x_4 + x_5 -
x_6)\pd{\bar g_2}{x_3}\\ &+x_3^2\pd{\bar
g_2}{x_6}+\frac{k-1}{4}F_{12}\left(\pd{\bar g_2}{x_4}+\pd{\bar
g_2}{x_5}+\pd{\bar g_2}{x_6}\right)=0.
\end{split}
\]
Write $\bar g_2=x_3^mg_3\not\equiv0$, with $m\in\N\cup\{0\}$ and
$x_3\nmid g_3$. We get
\[
\begin{split}
(j(-x_4&+x_5+x_6)+l(x_4 - x_5 + x_6)+m(x_4 + x_5 - x_6))g_3\\
&+x_3 (x_4 + x_5 - x_6)\pd{g_3}{x_3}+x_3^2\pd{g_3}{x_6}+\frac{k-1}{4}F_{12}\left(\pd{g_3}{x_4}+\pd{g_3}{x_5}+
\pd{g_3}{x_6}\right)=0.
\end{split}
\]
Let $\bar g_3=g_3|_{x_3=0}\not\equiv0$. On $x_3=0$ we have
\[
\begin{split}
(j(-x_4&+x_5+x_6)+l(x_4 - x_5 + x_6)+m(x_4 + x_5 - x_6))\bar g_3\\
&+\frac{k-1}{4}F_{123}\left(\pd{\bar g_3}{x_4}+\pd{\bar g_3}{x_5}+
\pd{\bar g_3}{x_6}\right)=0.
\end{split}
\]
We can apply Lemma \ref{L:Eq.estrella}. Hence the lemma follows.
\end{proof}

After Lemma \ref{L:g1=0}, we get $h\equiv0$. Thus the proof of
statement (d) of Theorem \ref{main} is finished.

\end{document}